	\newtheorem{theorem}{Theorem}[section]
	\newtheorem{conjecture}[theorem]{Conjecture}
	\newtheorem{lemma}[theorem]{Lemma}
	\newtheorem{proposition}[theorem]{Proposition}
	\newtheorem{defn}[theorem]{Definition}
	\newcommand{\U}{\begin{pmatrix}U \\0\end{pmatrix}}
	\newcommand{\Bound}{\partial \Omega}
	\date{}
\begin{document}
		\author{Ivan Pombo}
		\address{ Department of Mathematics\\
			University of Aveiro\\
			Campus Universit\'ario de Santagio\\
			3810-193 Aveiro, Portugal}
		\email{ivanpombo@ua.pt}
		\thanks{Department of Mathematics, Aveiro University, Aveiro 3810, Portugal.  This work was supported by Portuguese funds through CIDMA -
			Center for Research and Development in Mathematics and Applications and the Portuguese Foundation for Science and Technology 
			(``FCT--Funda\c{c}\~{a}o para a Ci\^{e}ncia e a Tecnologia''), within project UID/MAT/04106/2019}
		
		\title{CGO-Faddeev approach for Complex  Conductivities with Regular Jumps in two-dimensions}
		
		\begin{abstract}
			Researchers familiar with the state of the art are aware that the development of close-formed solutions for the EIT problem was not able to overpass the case of once-time differentiable conductivities beside the well known particular Astala-P\"aiv\"arinta result for zero frequency.  
			
			In this paper, we introduce some new techniques for the inverse conductivity problem combined with a transmission problem and achieve a reconstruction result based on an adaptation of the scattering data. The idea for these techniques, in particular the concept of admissible points is coming from E. Lakshtanov and B. Vainberg. Moreover, we are going to establish the necessary groundwork for working with admissible points which will be required in any further research in this direction.
			
		\end{abstract}
		
		\maketitle
		
		\textbf{Key words:}
		transmission problem, inverse conductivity problem, complex conductivity

		\section{Introduction}
		
		Consider $\mathcal O$ to be a bounded connected domain in $\mathbb R^2$ with a smooth boundary. The electrical impedance tomography (EIT) problem (e.g., \cite{borcea}) concerns the determination of the admittivity in the interior of $\mathcal O$, given simultaneous measurements of direct or alternating electric currents and voltages at the boundary $\partial \mathcal O$. 
		If the magnetic permeability is negligible, then the problem can be reduced to the inverse conductivity problem (ICP), which consists of reconstructing a function $\gamma(z), z \in \mathcal O$, via the known, dense in some adequate topology, set of data $(u|_{\partial \mathcal O},\frac{\partial u}{\partial\nu}|_{\partial \mathcal O})$, where
		\begin{equation}\label{set27A}
		\mbox{div}(\gamma \nabla u(z)) =0, ~  z\in \mathcal O.
		\end{equation}
		Here $\nu$ is the unit outward normal to $\partial \mathcal O$, $\gamma(z) = \sigma(z)+ i \omega\epsilon(z)$, where $\sigma$ is the electric conductivity and $\epsilon$ is the electric permittivity. If the frequency $\omega$ is sufficiently 
		small, then one can approximate $\gamma$ by a real-valued function.
		
		Previous approaches for EIT with isotropic admittivity, which were in active use for the last three decades,  can be divided in two groups: closed-form solution or sample methods where we refer to reviews \cite{borcea} and \cite{Pott} for further details as well as latest articles \cite{Aless1,Aless2,afr,BIY15,bukh,BKL,GL,GLSS,HT,knud,LL,lnv,lbcond,ltv,T}. These approaches do  not fully coincide, 
		for example the Linear Sampling Method (LSM) allows the reconstruction of the parameter's jump location, but it assumes the medium is known outside of the jump. An example of a weak point for the actual closed-form methods, i.e. Complex Geometric Optics (CGO) methods,  is that they do not allow for the presence of impenetrable obstacles anywhere inside the medium.

		Another problem which appears in the case of complex conductivities is the existence of exceptional points, i.e. non-trivial scattering solutions to the Lippmann-Schwinger equation  - roughly speaking, points where the solution for a given spectral parameter is not unique. Most methods for the inverse conductivity problem require the  condition that such exceptional points cannot occur (see, for example, \cite{nachman}). First ideas on how to handle the case of exceptional points appear in \cite{Novikov} and further in \cite{lnv}, \cite{lbcond}.

		For several years E. Lakshtanov and B. Vainberg had a parallel work with Armin Lechleitner on topics 
		like Interior Transmission Eigenvalues, inside-outside duality and factorization methods. In autumn of 2016 Armin wrote to them "...I'm actually not sure whether it pays off to develop these sampling methods further and further but I would be more interested in having methods for background media. We might try to continue our work in this direction, or towards Maxwell's equations, if this makes sense to you...". Although, the factorization methods are quite stable relatively to measurement errors, they fail if the outside medium is not known exactly but approximately. Armin Lechleitner obtain several results 
		in this direction, e.g. \cite{BKL}, \cite{GL}. In turn, E. Lakshtanov, R. Novikov and B. Vainberg also got some closed-form reconstruction/uniqueness results \cite{lnv},\cite{lbcond}. Furthermore, E. Lakshtanov, and B.Vainberg  got a feeling that LSM and CGO methods can be applied simultaneously to
		reconstruct the shape of the jump even if the potential is unknown. This lead to the new ideas being presented in the current paper, first among them the concept of admissible points. It is our believe that this concept will be an important step on how to proceed in the case of non-zero frequencies.
		
		The author would like to point out that the main ideas in this paper are from E. Lakshtanov and B. Vainberg who due to life circumstances were unable to pursue this line of research. The author is deeply indebted to them for allowing him to work out the details.
		
		As the methods for 2D and 3D are quite different even at the level of Faddeev Green function analysis, we focus our analysis on the 2D case only. Although, future plans are to extend the machinery we will present in order to obtain similar results in the 3D case.
		Moreover, in this paper we treat the isotropic case for complex conductivities with a jump. Recent results on the anisotropic case with real piecewise constant conductivities can be found in \cite{Aless3,Aless4}. One further extension of our approach could be to consider the anisotropic case with complex conductivities based on the previously mentioned works. 
		 
		We suppose that the conductivity function $\gamma$ is somehow smooth (to determine later) except in a closed contour $\Gamma \Subset \mathcal O$. Let $\gamma^+$ be the trace of $\gamma$ at the exterior part of the $\Gamma$ and $\gamma^-$ be the trace at the  interior part. By $\mathcal{D}$ we denote the interior part of $\Gamma$. 
		
		Under our assumption on $\gamma$ we look at solutions of the problem (\ref{set27A}) which are quite smooth in each domain, $u^- \in \mathcal{D}$ and $u^+ \in \mathcal O \backslash \overline{\mathcal D},$ and satisfy the following condition at $\Gamma$
		
		\begin{equation}\label{TransCond}
		\left \{
		\begin{array}{l}
		u^-(z)-u^+(z)=0, \quad \quad \quad \\
		\gamma^- \frac{\partial u^-}{\partial \nu}(z) - \gamma^+\frac{\partial u^+}{\partial \nu}(z) =0,  \quad \quad 
		\end{array}
		z \in \Gamma.
		\right .
		\end{equation}
		\vspace{0.2cm}
		
		The purpose of this approach is to establish a new method to overcome the limitation of Lipschitz conductivities in the current literature. In particular, we have in mind the handling of cases where separation of tissues is an important issue, like in detection of nodules through medical imaging.
		
		The reconstruction procedure of $\gamma$ starts by converting the conductivity similarly to \cite{bu} and \cite{fr}. Let $u$ be a solution of (\ref{set27A}) but only on the domain $\,\mathcal{O}\setminus\Gamma\,$ satisfying  the transmission condition above (\ref{TransCond}). 
		Below $z$ denotes a point in the complex plane and $\mathcal{O}$ is a domain in $\mathbb{C}$. Let $\partial=\frac{1}{2}\left(\frac{\partial}{\partial x}-i\frac{\partial}{\partial y}\right)$. Then the pair
		
		\begin{align}\label{0707B}
		\phi=(\phi_1, \phi_2)=\gamma^{1/2}(\partial u, \bar{\partial} u)^t=\gamma^{1/2}\left(
		\!\!\!\begin{array}{c}
		\partial u \\
		\bar{\partial} u\\
		\end{array} \!\!\!
		\right)
		\end{align}
		satisfies the Dirac equation
		\begin{equation}\label{firbc}
		\left ( \begin{array}{cc} \bar{\partial} & 0 \\ 0  & \partial \end{array} \right ) \phi(z) = q(z) \phi(z), \quad z=x+iy  \in \mathbb{C}\setminus\Gamma,
		\end{equation}
		with the potential $q$ defined also in $\mathbb{C}\setminus\Gamma$ by
		\begin{eqnarray}\label{char1bc}
		q(z)=\left ( \begin{array}{cc}0 &q_{12}(z) \\
		q_{21}(z) &  0\end{array} \right ), \quad q_{12}=-\frac{1}{2}\partial \log \gamma, \quad q_{21}=-\frac{1}{2}\bar{\partial }\log \gamma,
		\end{eqnarray}
		where we extend, as usual, $\gamma$ to the outside of  $\mathcal{O}$ by setting $\gamma=1$. On $\Gamma$, the pair $\phi$ satisfies a transmission condition for the Dirac equation which is derived from the previous one and we show it below.
		
		Thus, it is enough to solve the inverse Dirac scattering problem instead of the ICP. If it is solvable and $q$ can be found then the conductivity $\gamma$ is immediately obtained from (\ref{char1bc}), up to a constant. 
		In order to complete the reduction of the ICP to the inverse Dirac problem, one needs only to obtain the scattering data for the Dirac equation via the set of data $\left (u\!\left |_{\partial \mathcal O}\right .,\frac{\partial u}{\partial\nu}\!\left |_{\partial \mathcal O}\right. \right).$ 
				
		In fact, the scattering data for the Dirac equation can be obtained by simple integration of its Dirichlet data against the conjugate of an entire function $U$, which is related to the new set of complex geometric optic asymptotics, i.e.
		for a spectral parameter $\lambda$ and $w$ a certain type of point to be introduced, we have the scattering data to be define by
		\begin{align*}
			h(\lambda,w)= \int_{\partial \mathcal O} \overline{U(z,w,\lambda)} \,e^{-\overline{\lambda (z-w)^2}/4}\phi_2(z,w,\lambda)d\bar{z}.
		\end{align*}

		In this paper we give a reconstruction formula of the potential $q$ in the so-called admissible points (see Theorem~\ref{210918T}). We announce the result here in terms of a uniqueness theorem first since it does not require the introduction of the formal definition of the scattering data.
		We assume  that $\log \gamma$ is well defined in the whole complex plane, by assuming that the real part of the conductivity has a positive lower bounded.

		We have to remark that, in fact, we are going to present only a partial result, given that we cannot yet reconstruct, and show uniqueness of, the potential $q$ in the whole of $\mathbb{C}\setminus\Gamma$. Our proof is based on a new concept, that is based on a specific set of points, which we now define:

		\begin{defn} \label{PontosAdmissiveis}
			
			We say that a point $w\in\mathcal{O}$ is an {\it admissible point} if there is a number $\lambda_{\mathcal O} \in \mathbb{C}$ such that
			\begin{align*}
			A&:=\sup_{z \in \overline{\mathcal O}}\, \textnormal{Re}[ \lambda_{\mathcal O}(z-w)^2]< 1/2\\
			B&:= \sup_{z \in \overline{\mathcal D}} \, \textnormal{Re}[ \lambda_{\mathcal O}(z-w)^2] <  -1/2.
			\end{align*}
			
			Moreover, if $w$ is an admissible point and the constants $A$ and $B$ fulfills
			
			$A=1/2-\epsilon_1, \, B=-1/2-\epsilon_2,$ with $\epsilon_2-\epsilon_1>0$, we further say that $w$ is a proper admissible point.
			
		\end{defn}

		The main theorem of this chapter will be obtained using this novel idea. Even though the proof follows by reconstruction, we give here the uniqueness theorem without introducing the scattering data first, given that this will be related with new CGO incident waves.
		\begin{theorem}
			
			Let $\Omega$ be a bounded Lipschitz domain in the plane, and let $\gamma\in W^{2,\infty}(\mathcal{D})\cap W^{2,\infty}(\mathcal{O} \setminus\overline{\mathcal{D}})$ such that $\textnormal{Re}(\gamma)\geq c>0$. If $\sqrt{\frac{\gamma^-}{\gamma^+}}-1$ is small enough on $L^{\infty}(\Gamma)$, we have that the Dirichlet-to-Neumann map $\Lambda_\gamma$ determines the conductivity $\gamma$ uniquely in any proper admissible point.
		\end{theorem}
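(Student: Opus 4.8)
The plan is to prove uniqueness by explicit reconstruction, following the chain
\[
\Lambda_\gamma \;\longrightarrow\; \phi|_{\partial\mathcal O} \;\longrightarrow\; h(\lambda,w) \;\longrightarrow\; q \;\longrightarrow\; \gamma,
\]
evaluated at each proper admissible point $w$. Since the reduction of the ICP to the inverse Dirac problem (\ref{firbc})--(\ref{char1bc}) is already in place, and since $q$ determines $\log\gamma$ up to an additive constant that is pinned down by the normalization $\gamma\equiv 1$ outside $\mathcal O$, it suffices to show that $\Lambda_\gamma$ determines the scattering data $h(\lambda,w)$ and that $h(\lambda,w)$ in turn determines $q(w)$ at every proper admissible point through the reconstruction formula of Theorem~\ref{210918T}. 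Note that the hypothesis $\gamma\in W^{2,\infty}$ on each of $\mathcal D$ and $\mathcal O\setminus\overline{\mathcal D}$ guarantees $q\in L^\infty$ with compact support and one weak derivative on each piece, which is the regularity needed for the integral operators and asymptotic expansions below to be meaningful.

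First I would pass from the Dirichlet--to--Neumann data to the boundary trace of the CGO solution $\phi$. Because $\Gamma\Subset\mathcal O$, the conductivity is smooth in a neighbourhood of $\partial\mathcal O$, so the relation $\phi=\gamma^{1/2}(\partial u,\bar\partial u)^t$ recovers $\phi|_{\partial\mathcal O}$ directly from the Cauchy pair $(u|_{\partial\mathcal O},\partial_\nu u|_{\partial\mathcal O})$, which is precisely the graph of $\Lambda_\gamma$. The next task is to set up a boundary integral equation of Fredholm type relating $\phi|_{\partial\mathcal O}$, the prescribed CGO asymptotics governed by $U(z,w,\lambda)e^{\lambda(z-w)^2/4}$, and the operator $\Lambda_\gamma$; solving this equation determines $\phi|_{\partial\mathcal O}$ from the measured data, and then $h(\lambda,w)$ follows by the stated boundary integral against $\overline{U}\,e^{-\overline{\lambda(z-w)^2}/4}$.

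The analytic heart of the argument is the existence and uniqueness of the CGO solutions themselves, that is, the unique solvability of the associated Lippmann--Schwinger equation for the quadratic-phase ansatz. Expressing $\phi$ through the Faddeev-type Green's function for the operator on the left of (\ref{firbc}) with phase $\lambda(z-w)^2/4$, one obtains an integral operator whose kernel carries a weight built from $\mathrm{Re}[\lambda(z-w)^2]$. Here the admissibility conditions enter decisively: the bound $A<1/2$ controls this weight on all of $\overline{\mathcal O}$, while $B<-1/2$ forces exponential smallness over the interior region $\overline{\mathcal D}$, where the potential $q$ need not be small; the proper-admissibility gap $\epsilon_2-\epsilon_1>0$ supplies the margin needed to close a contraction estimate in the appropriate weighted space. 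The transmission condition on $\Gamma$ derived from (\ref{TransCond}) contributes an additional boundary operator on $\Gamma$ whose norm is controlled by $\|\sqrt{\gamma^-/\gamma^+}-1\|_{L^\infty(\Gamma)}$; the smallness hypothesis renders this a small perturbation, so the full operator is invertible, and the CGO solution exists and is unique. In particular, no exceptional point occurs at a proper admissible $w$.

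With unique solvability in hand, the reconstruction of $q$ at $w$ from $h(\lambda,w)$ is delivered by Theorem~\ref{210918T}, and $\gamma(w)$ is read off from (\ref{char1bc}). The main obstacle is exactly this unique-solvability step: one must show that the weighted estimate for the Faddeev Green's function under the admissibility inequalities, combined with the small transmission operator on $\Gamma$, produces a genuine contraction, equivalently a trivial kernel for the Fredholm operator. Controlling the Faddeev kernel with the nonstandard quadratic phase on the geometry separated by $\Gamma$, and matching the two traces $\phi^{\pm}$ across $\Gamma$ within the weighted space, is the delicate part of the proof.
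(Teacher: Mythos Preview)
Your chain $\Lambda_\gamma \to \phi|_{\partial\mathcal O}\to h(\lambda,w)\to q(w)\to\gamma(w)$ is exactly the route the paper takes, and you correctly identify Theorem~\ref{210918T} as the engine that converts $h$ into $q$ at a proper admissible point. The step $\Lambda_\gamma\to h$ is handled in the paper via Section~\ref{ScDN} (Theorem~\ref{Th1112A}) together with an argument deferred to \cite{lbcond}, so your sketch of a boundary integral equation for $\phi|_{\partial\mathcal O}$ is in the right spirit.

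Where your proposal drifts from the paper is in the \emph{attribution of roles} to the various smallness hypotheses. In the paper the contraction of $I+M$ on $\mathcal H^p$ (Lemma~\ref{lemma2107A}) does \emph{not} use admissibility at all: the pieces $D\widetilde Q_\lambda D\widetilde Q_\lambda$ and $D\widetilde Q_\lambda P\widetilde A_\lambda$ are contracted by taking $R$ large (stationary phase decay of the kernel $G$ in (\ref{AAA})), while $P\widetilde A_\lambda$ is contracted solely by the small-jump hypothesis $\|\alpha-1\|_{L^\infty(\Gamma)}\ll 1$. The admissibility inequalities enter only \emph{after} invertibility is secured, to show that the right-hand side of (\ref{1007P}) lies in the weighted space: the condition $B<-1/2$ bounds $|U|$ on the curve $\Gamma$ (not on the bulk $\overline{\mathcal D}$, and not because $q$ is large there --- $q\in L^\infty$ everywhere), and the proper-admissibility gap $\epsilon_2-\epsilon_1>0$ is precisely what makes $|\lambda|^{-A}U|_\Gamma\le |\lambda|^{-1-(\epsilon_2-\epsilon_1)}$ integrable in $\lambda$, giving (\ref{Hp}) in Lemma~\ref{lemma1107A}. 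The same bound on $\Gamma$ drives Lemma~\ref{260918A}. So the logic is: large $|\lambda|$ plus small jump $\Rightarrow$ contraction; admissibility $\Rightarrow$ right-hand side in $\mathcal H^p$; proper admissibility $\Rightarrow$ the remainder terms in the scattering-data expansion are integrable in $\lambda$, so the limit in (\ref{210918A}) isolates $q_{21}(w)$. Your outline is sound, but if you were to write out details you would need to relocate these hypotheses accordingly.
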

		
		Hereby, we want to point out that the Dirichlet-to-Neumann map determines the scattering data uniquely can be proven similarly to~\cite{lbcond} (see Section \ref{ScDN}).
		\\

		Now, Theorem~\ref{210918T} will even provide a reconstruction formula for the potential $q$ in so-called proper admissible points. This is an improvement of previous existent methods insofar as a convenient enlargement of the set of CGO incident waves allows to highlight the desirable areas around such points. Thus, this article provides a 2D reconstruction result for complex conductivities which are discontinuous on a contour which, although being apparently a rather weak result, cannot possibly be obtained by any previous technique, at least that we know of, and represents a first step in this direction. In fact the main goal of the article is to show the viability of the presented approach. In this manner, all our efforts are to present the main tools for this approach, leaving other questions like, stability of determination as in \cite{Aless}, many contours and geometry of admissible points, to future work.
	
		We also want to point out that our definition of admissible point is not sharp, i.e. it can be made sharper by considering higher regularity of the conductivity outside the curves of discontinuities $\Gamma$.

		Several technical problems need to be solved and presented now in order to facilitate the subsequent study. These include: the right choice of the functional space, a set of admissible points (essential to the reconstruction), and the enrichment of the set of CGO incident waves (i.e. we use solutions like $|\lambda|^{f(z)}$ which highlight desirable areas). The latter solutions are unlimited even after the CGO-Faddeev normalization and we are required to obtain two-dimensional Laplace Transform analogues of the Hausdorff-Young inequality to derive our reconstruction formula. \\
		
		The paper is organized as follows: In Section 2 we recall necessary facts on the transmission condition and the construction of the Lippmann-Schwinger equation for CGO-Faddeev solutions in our case. In Section 3. we introduce the necessary function spaces as well as related lemmas. We present the novel concept of \textit{admissible points} (see Definition \ref{PontosAdmissiveis}) based on a convenient enrichment of the set of CGO incident waves and we study the scattering data and reconstruction of the potential in these type of points. We finalize this section with two subsections containing some more necessary results and the proof of our main theorem. For the sake of readability we placed some additional results together with its proofs in an appendix.

\section{Main construction}
    	
    	\subsection{Transmission condition}
    	We denote by $n(z)=\left(n_x(z),n_y(z)\right)$ the unit outer normal vector in $\Gamma$ and on the complex plane by $\nu(z)=n_x(z)+in_y(z)$.
    	During the paper we consider two orientations for the contour $\Gamma$: positively oriented $\Gamma^+$ (curve interior $\mathcal{D}$ is to the left) and  negatively oriented $\Gamma^-$ (curve interior is to the right).
    	\begin{lemma}
    		The transmission condition (\ref{TransCond}) implies the following condition to the Dirac equation on $\Gamma$
    		\begin{equation}\label{7JunB}
    		\left(
    		\begin{array}{c}
    		\phi_1^+ - \phi_1^- \\
    		\phi_2^+ - \phi_2^- \\
    		\end{array}
    		\right)
    		=
    		\frac{1}{2}\left(
    		\begin{array}{cc}
    		\alpha+\frac{1}{\alpha}-2 & (\alpha-\frac{1}{\alpha})\bar{\nu}^2 \\
    		(\alpha-\frac{1}{\alpha})\nu^2 & \alpha+\frac{1}{\alpha}-2 \\
    		\end{array}
    		\right)
    		\left(
    		\begin{array}{c}
    		\phi_1^- \\
    		\phi_2^- \\
    		\end{array}
    		\right)
    		\end{equation}
    		where $\alpha=\sqrt{\frac{\gamma^-}{\gamma^+}}$.
    	\end{lemma}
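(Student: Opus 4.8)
The plan is to translate the scalar transmission condition (\ref{TransCond}) into a condition on the Wirtinger derivatives $\partial u$ and $\bar\partial u$ by splitting the gradient of $u$ into its normal and tangential parts along $\Gamma$, since (\ref{TransCond}) is phrased precisely in those terms. First I would record how the normal derivative $\frac{\partial u}{\partial \nu}=\nabla u\cdot n$ and the tangential derivative $\frac{\partial u}{\partial \tau}$ (the tangent corresponding to $i\nu$ in the complex plane, once the orientation of $\Gamma$ is fixed) express through $\partial u,\bar\partial u$. A direct computation from $\partial=\tfrac12(\partial_x-i\partial_y)$ and $\nu=n_x+in_y$ gives
\begin{align*}
\frac{\partial u}{\partial \nu}=\nu\,\partial u+\bar\nu\,\bar\partial u, \qquad \frac{\partial u}{\partial \tau}=i\big(\nu\,\partial u-\bar\nu\,\bar\partial u\big).
\end{align*}
Because $|\nu|=1$ on $\Gamma$, these invert cleanly to
\begin{align*}
\partial u=\frac{1}{2\nu}\Big(\frac{\partial u}{\partial \nu}-i\,\frac{\partial u}{\partial \tau}\Big), \qquad \bar\partial u=\frac{1}{2\bar\nu}\Big(\frac{\partial u}{\partial \nu}+i\,\frac{\partial u}{\partial \tau}\Big).
\end{align*}

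Next I would feed in the two scalar jump relations. Continuity $u^-=u^+$ on $\Gamma$ forces the tangential derivatives to coincide, $\frac{\partial u^-}{\partial \tau}=\frac{\partial u^+}{\partial \tau}$, while the flux relation gives $\frac{\partial u^+}{\partial \nu}=\frac{\gamma^-}{\gamma^+}\frac{\partial u^-}{\partial \nu}=\alpha^2\,\frac{\partial u^-}{\partial \nu}$. Using the inversion formulas on the interior side together with $\phi_j^-=(\gamma^-)^{1/2}(\cdot)$ from (\ref{0707B}), I would solve for $\frac{\partial u^-}{\partial \nu}$ and $\frac{\partial u^-}{\partial \tau}$ in terms of $\phi_1^-$ and $\phi_2^-$, obtaining $\frac{\partial u^-}{\partial \nu}=(\gamma^-)^{-1/2}(\nu\phi_1^-+\bar\nu\phi_2^-)$ and $i\frac{\partial u^-}{\partial \tau}=(\gamma^-)^{-1/2}(\bar\nu\phi_2^--\nu\phi_1^-)$.

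To finish I would substitute these, together with the two transmission relations, into $\phi_1^+=(\gamma^+)^{1/2}\partial u^+$ and $\phi_2^+=(\gamma^+)^{1/2}\bar\partial u^+$. The common ratio $(\gamma^+)^{1/2}/(\gamma^-)^{1/2}=1/\alpha$ then factors out, and the phase ratios simplify via $|\nu|=1$ as $\bar\nu/\nu=\bar\nu^2$ and $\nu/\bar\nu=\nu^2$. Collecting coefficients yields $\phi_1^+=\tfrac12\big[(\alpha+\tfrac1\alpha)\phi_1^-+(\alpha-\tfrac1\alpha)\bar\nu^2\phi_2^-\big]$ and the analogous expression for $\phi_2^+$; subtracting $\phi_1^-$, resp. $\phi_2^-$, produces the $-2$ in the diagonal entries and gives exactly (\ref{7JunB}).

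The computation involves no analytic difficulty, so the main obstacle is purely bookkeeping: fixing the tangent direction consistently with the chosen orientation of $\Gamma$ (the paper distinguishes $\Gamma^+$ and $\Gamma^-$), tracking the factors of $i$ in the Wirtinger calculus, and branching $\gamma^{1/2}$ so that the ratio comes out as $1/\alpha$ rather than its reciprocal. As a sanity check I would verify the no-jump limit $\alpha=1$, in which the right-hand matrix vanishes and $\phi$ is continuous across $\Gamma$, as it must be.
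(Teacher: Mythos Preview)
Your proposal is correct and follows essentially the same route as the paper: both decompose $\partial u,\bar\partial u$ into normal and tangential components along $\Gamma$, read off the transmission conditions as $u_\tau^+=u_\tau^-$ and $u_\nu^+=\alpha^2 u_\nu^-$, and then express the result in terms of $\phi_1^-,\phi_2^-$ via the inversion $\sqrt{\gamma^-}\,(u_\nu^-,u_\tau^-)\leftrightarrow(\phi_1^-,\phi_2^-)$. The only cosmetic difference is that the paper computes the jumps $\phi_j^+-\phi_j^-$ directly, whereas you first compute $\phi_j^+$ and then subtract; the algebra and the use of $|\nu|=1$ are identical.
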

    	\begin{proof}
    	Let $l(z)=(-n_y(z),n_x((z))$ be a unit tangential vector to $\Gamma$.
    	From the first equation of (\ref{TransCond}) follows for the tangential derivative that $\frac{\partial}{\partial l} (u^+(z)-u^-(z))=0$ and, therefore, 
    	$$
    	\sqrt{\gamma^+} u_l^+ - \sqrt{\gamma^-} u_l^- =u_l^- \sqrt{\gamma^-} (\frac{1}{\alpha}-1),
    	$$
    	where $u_l=\frac{\partial u}{\partial l}$. Moreover, during this proof and to simplify the computations we denote the normal derivative as  $u_n=\frac{\partial u}{\partial \nu}$. From the second equation of (\ref{TransCond}) we get $u_n^+ =\frac{\gamma^-}{\gamma^+}  u_n^- $, where $u_n^{\pm}$ denotes the normal derivative of $u^{\pm}$, so that 
    	$$
    	\sqrt{\gamma^+} u_n^+ - \sqrt{\gamma^-} u_n^-=\sqrt{\gamma^-} u_n^- (\alpha-1).
    	$$
    	Note that we have now
    	\begin{eqnarray} 
    	{\partial}u = \frac{1}{2} (\bar{\nu} u_n - i\bar{\nu} u_l), \label{8} \\
    	\bar{\partial}u = \frac{1}{2} (\nu u_n + i\nu u_l), \label{9}
    	\end{eqnarray}
    	\begin{align*}
    	\phi_1^+-\phi_1^-= \sqrt{\gamma^+} {\partial}u^+ - \sqrt{\gamma^-} {\partial}u^-=
    	\left(\frac{1}{\alpha}-1\right) u_l^- \sqrt{\gamma^-} \frac{1}{2}(-i\bar{\nu}) + ({\alpha}-1) u_n^- \sqrt{\gamma^-} \frac{1}{2}\bar{\nu},   \\
    	\phi_2^+-\phi_2^-= \sqrt{\gamma^+} \bar{\partial}u^+ - \sqrt{\gamma^-} \bar{\partial}u^-=
    	\left(\frac{1}{\alpha}-1\right) u_l^- \sqrt{\gamma^-} \frac{1}{2}(i\nu) + \left({\alpha}-1\right) u_n^- \sqrt{\gamma^-} \frac{1}{2}\nu.
    	\end{align*}
    	
    	These relations take the matricial form 
    	
    	$$
    	\left(
    	\begin{array}{cc}
    	\phi_1^+-\phi_1^-\\
    	\phi_2^+-\phi_2^-\\
    	\end{array}
    	\right) = \frac{1}{2}\left(
    	\begin{array}{cc}
    	(\alpha-1)\bar{\nu} & (\frac{1}{\alpha}-1) (-i\bar{\nu}) \\
    	(\alpha-1){\nu} & (\frac{1}{\alpha}-1)(i{\nu}) \\
    	\end{array}
    	\right)\left(
    	\begin{array}{cc}
    	u_n^- \sqrt{\gamma^-}  \\
    	u_l^- \sqrt{\gamma^-}  \\
    	\end{array}
    	\right).
    	$$
    	Using (\ref{8}) and (\ref{9}), together with the definition of $\phi,$ we obtain the relation
    	$$
    	\left(
    	\begin{array}{cc}
    	u_n^- \sqrt{\gamma^-}  \\
    	u_l^- \sqrt{\gamma^-}  \\
    	\end{array}
    	\right)=
    	\left(
    	\begin{array}{cc}
    	{\nu} & \bar{\nu} \\
    	i\nu & -i\bar{\nu} \\
    	\end{array}
    	\right)\left(
    	\begin{array}{cc}
    	\phi_1^-  \\
    	\phi_2^-\\
    	\end{array}
    	\right).
    	$$
    	These two previous displayed equations allows us to complete the proof of the lemma.
    \end{proof}
%
%
%
    	
    	\subsection{The Lippmann-Schwinger equation for CGO-Faddeev solutions}
    	Consider the vector $\phi$ which satisfies (\ref{firbc}) and the following asymptotic
    	\begin{align}\label{8JunA}
    	\begin{array}{l}
    	\phi_1(z,w,\lambda)=e^{\lambda (z-w)^2/4}U(z,w,\lambda)+e^{\lambda (z-w)^2/4}o(1), \quad\\ \phi_2(z,w,\lambda)=e^{\overline{\lambda (z-w)^2}/4}o(1), 
    	\end{array} \quad z \rightarrow \infty.
    	\end{align}
    	where $U(z,w,\lambda)$ is an entire function with respect to the parameter $z$.
    	
    	We denote  
    	\begin{equation}\label{200918A}
    	\mu_1(z,w,\lambda)=\phi_1(z,w,\lambda) e^{-\lambda (z-w)^2/4}, \quad \mu_2(z,w,\lambda)=\phi_2(z,w,\lambda)e^{-\overline{\lambda (z-w)^2}/4}.
    	\end{equation}
    	
    	Further, we introduce some matrix functions that will establish a integral equation for $\mu$.
    	
    	Due to (\ref{firbc}), this functions fulfill the following equation on $\mathbb{C}\setminus\Gamma$:
    	\begin{equation}\label{muirbc}
    	\left ( \begin{array}{cc} \bar{\partial}_z & 0 \\ 0  & \partial_z \end{array} \right ) \mu =\left( \begin{array}{cc}
    	0 & q_{12}(z)e^{-i\,\text{Im}[\lambda(z-w)^2/2]} \\
    	q_{21}(z)e^{i\,\text{Im}[\lambda(z-w)^2/2]}
    	\end{array}\right) \mu =: \tilde{q}\mu.
    	\end{equation}
    	
    	On the contour $\Gamma$ they fulfill a transmission condition similar to (\ref{7JunB}), with the right-hand side being substituted by:
    	$$
    	\widetilde{A}_\lambda\mu=\frac{1}{2}\left(
    	\begin{array}{cc}
    	\alpha+\frac{1}{\alpha}-2 & (\alpha-\frac{1}{\alpha})\bar{\nu}^2 e^{-i\,\textnormal{Im}[\lambda (z-w)^2/2]}\\
    	(\alpha-\frac{1}{\alpha})\nu^2e^{i\,\textnormal{Im}[\lambda (z-w)^2/2]} & \alpha+\frac{1}{\alpha}-2 \\
    	\end{array}
    	\right)\,
    	\left(\begin{array}{cc}
    	\mu_1^- \\
    	\mu_2^-
    	\end{array}\right),
    	$$
    	where $\mu_1^-$ and  $\mu_2^-$ are the traces values of $\mu$ taken from the interior of $\Gamma$.
    	
    	Through this we obtain an integral equation for $\mu$:
    	\begin{proposition}
    		Let $\mu$ be a solution of (\ref{muirbc}) given as above through a function $\phi$ which fulfills (\ref{firbc}) and the asymptotics (\ref{8JunA}). Then $\mu$ is a solution of the following integral equation:
    		\begin{equation}\label{0807A}
    		(I+P\widetilde{A}_\lambda -D\widetilde{Q}_\lambda )\mu=   \left(
    		\begin{array}{c}
    		U \\
    		0 \\
    		\end{array}
    		\right),
    		\end{equation}
    		where $D=\left(
    		\begin{array}{cc}
    		\bar{\partial}^{-1} & 0 \\
    		0 & \partial^{-1} \\
    		\end{array}
    		\right)$ with $ \bar{\partial}^{-1}f(z)=\frac{1}{2\pi i}\int_{\mathbb{C}} \frac{f(\varsigma)}{\varsigma-z}\,d\varsigma\wedge d\bar{\varsigma} 
    		$
    		and the $\partial^{-1}$ is given through the complex conjugate of the kernel $(\varsigma-z)^{-1}$. The matrix $\widetilde{Q}_\lambda$ has the following form 
    		
    		$$
    		\widetilde{Q}_\lambda=\left(
    		\begin{array}{cc}
    		0 & Q_{12} e^{-i\,\textnormal{Im}[\lambda (z-w)^2/2]}\\
    		Q_{21} e^{i\,\textnormal{Im}[\lambda (z-w)^2/2]} & 0 \\
    		\end{array}
    		\right),
    		$$ where $Q_{12},\,Q_{21}$ are $L^{\infty}$ extensions of $q_{12},\,q_{21}$ to $\Gamma$.
    		Moreover, $P$ is a projector 
    		\begin{equation}\label{14a}
    		P=\left(
    		\begin{array}{cc}
    		P_+ & 0 \\
    		0& P_- \\
    		\end{array}
    		\right),
    		\end{equation} where $P_+,\,P_-$ are the Cauchy projector and its complex adjoint, respectively:
    		$$
    		P_+f(w)=\frac{1}{2\pi i} \int_{\Gamma^+} \frac{ f(z)}{z-w}\,dz, \quad P_-f(w)=\frac{1}{2\pi i} \int_{\Gamma^+} \frac{ f(z)}{\bar{z}-\bar{w}}\,d\bar{z},  \quad w \in \mathbb C.
    		$$
    		Hereby, $f$ is a function defined on the contour $\Gamma$.
    	\end{proposition}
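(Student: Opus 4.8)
The plan is to turn the first-order system (\ref{muirbc}), together with its transmission condition on $\Gamma$ and the asymptotics (\ref{8JunA}), into the integral equation (\ref{0807A}) by inverting the operator $\operatorname{diag}(\bar\partial,\partial)$ via the Cauchy--Pompeiu formula applied \emph{separately} in the two domains $\mathcal D$ and $\mathbb C\setminus\overline{\mathcal D}$ and then recombining. First I would record the behaviour at infinity: combining (\ref{200918A}) with (\ref{8JunA}) gives $\mu_1(z)\to U$ and $\mu_2(z)\to 0$ as $z\to\infty$. Since $U$ is entire in $z$ we have $\bar\partial U=0$, so the vector $\U$ lies in the kernel of $\operatorname{diag}(\bar\partial,\partial)$ and, being continuous, has no jump across $\Gamma$. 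Hence $v:=\mu-\U$ solves the same system off $\Gamma$, carries exactly the jump of $\mu$ across $\Gamma$, and decays at infinity; this is the object to which I apply the representation formula.

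The core computation is carried out componentwise. For the first component I apply the Cauchy--Pompeiu representation to $v_1=\mu_1-U$ in each domain, using the positively oriented contour $\Gamma^+$. The interior formula expresses $v_1(w)$, for $w\in\mathcal D$, through the interior trace $\mu_1^-$ and the area integral of $\bar\partial\mu_1$ over $\mathcal D$; the exterior formula, in which the circle-at-infinity term drops out because $v_1\to0$, expresses $v_1(w)$ for $w$ outside through the exterior trace $\mu_1^+$ (with the reversed orientation $\Gamma^-=-\Gamma^+$ contributing a sign) and the area integral over $\mathbb C\setminus\overline{\mathcal D}$. Adding the representation valid at $w$ to the complementary-domain formula, which evaluates to $0$ by Cauchy's theorem, collapses the two boundary integrals into $-\tfrac{1}{2\pi i}\int_{\Gamma^+}\frac{\mu_1^+-\mu_1^-}{\varsigma-w}\,d\varsigma=-P_+[\mu_1](w)$, while the two area integrals merge into the full-plane transform $\bar\partial^{-1}(\bar\partial\mu_1)$. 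Replacing $\bar\partial\mu_1$ by $q_{12}e^{-i\,\textnormal{Im}[\lambda(z-w)^2/2]}\mu_2$ from (\ref{muirbc}), and extending $q_{12}$ to the null set $\Gamma$ by $Q_{12}$, turns the area term into $(D\widetilde Q_\lambda\mu)_1$, while the transmission condition $\mu^+-\mu^-=\widetilde A_\lambda\mu^-$ identifies $P_+[\mu_1]$ with $(P\widetilde A_\lambda\mu)_1$. The outcome is $\mu_1+(P\widetilde A_\lambda\mu)_1-(D\widetilde Q_\lambda\mu)_1=U$. The second component is handled identically, now with the conjugate kernel defining $\partial^{-1}$ and the adjoint projector $P_-$, and with value $0$ at infinity, giving $\mu_2+(P\widetilde A_\lambda\mu)_2-(D\widetilde Q_\lambda\mu)_2=0$. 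Assembling the two lines produces precisely $\bigl(I+P\widetilde A_\lambda-D\widetilde Q_\lambda\bigr)\mu=\U$.

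I expect the delicate points to be twofold. First, the orientation bookkeeping: one must track $\Gamma^+$ versus $\Gamma^-$ so that the boundary integrals from the two domains combine with the correct signs into a single Cauchy projector applied to the jump, and verify that the numerical constants indeed yield the coefficient $+1$ in front of $P$ and $-1$ in front of $D$ rather than some stray factor. Second, one must justify that the circle-at-infinity contribution vanishes, which rests on the decay of $\mu_1-U$ and of $\mu_2$ and on the compact support of $q$ (recall $\gamma\equiv1$ outside $\mathcal O$); the same support property guarantees convergence of the solid transform $D\widetilde Q_\lambda\mu$. The unimodular factors $e^{\pm i\,\textnormal{Im}[\lambda(z-w)^2/2]}$ have modulus one and are therefore harmless for these convergence questions, being simply carried along throughout the computation.
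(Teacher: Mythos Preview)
Your proposal is correct and follows essentially the same route as the paper: apply the Cauchy--Pompeiu (Cauchy--Green) formula separately in $\mathcal D$ and in the exterior domain truncated by a large disk, add the two identities, use the transmission condition to turn the boundary contribution into $P\widetilde A_\lambda\mu$, and identify the area term with $D\widetilde Q_\lambda\mu$. The only cosmetic difference is that you subtract $\U$ first so that the large-circle term vanishes by decay, whereas the paper keeps $\mu_1$ and evaluates the large-circle integral to $U(z)$ using that $\mu_1\to U$ and $U$ is entire; these are two sides of the same computation.
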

    	   
    	\begin{proof} We use the same approach as in \cite{lbcond}. The following Cauchy-Green formulas hold for each $f\in C^1(\overline{\Omega})$ and an arbitrary bounded domain $\Omega$ with smooth boundary:
    		\begin{eqnarray}\label{2DecB}
    		f(z)=  \frac{1}{2\pi i} \int_{ \Omega} \frac{\partial f (\varsigma)}{\partial \bar{\varsigma}} \frac{1}{\varsigma-z} \,d\varsigma\wedge d\bar{\varsigma} + \frac{1}{2\pi i} \int_{\Bound} \frac{f(\varsigma)}{\varsigma - z}d\varsigma, \quad z \in  \Omega, \\ \label{2DecC}
    		0=  \frac{1}{2\pi i} \int_{\Omega} \frac{\partial f (\varsigma)}{\partial \bar{\varsigma}} \frac{1}{\varsigma-z}\,d\varsigma\wedge d\bar{\varsigma} + \frac{1}{2\pi i} \int_{\Bound} \frac{f(\varsigma)}{\varsigma - z}d\varsigma,\quad z \not \in \overline{\Omega}.
    		\end{eqnarray}

    		Denote by $D_R$ a disk of radius $R$ and centered at z, and take  $D_R^-=D_R\setminus \overline{\mathcal{D}} $. We recall, $D$ is the interior part of $\Gamma$. Assume that $z\in \mathcal{D}$, $f=\mu_1$ in both formulas, and $\Omega=\mathcal{D}$ in (\ref{2DecB}) and $\Omega=D_R^-$ in (\ref{2DecC}). We add the left- and right-hand sides in formulas (\ref{2DecB}) and (\ref{2DecC}). Taking the transmission condition for $\mu$ into account, we obtain for fixed $w$ that
    		\begin{equation}\label{11NovA}
    		\mu_1(z,\lambda) = \frac{1}{2\pi i} \int_{D_R\setminus\Gamma}  (\tilde{q}\mu)_1(\varsigma,\lambda) \frac{1}{\varsigma-z}\,d\varsigma\wedge d\bar{\varsigma} + \frac{1}{2\pi i} \int_{\Gamma^-} \frac{ [\mu_1](\varsigma)}{\varsigma - z}d\varsigma+ \frac{1}{2\pi i} \int_{\partial D_R} \frac{\mu_1(\varsigma)}{\varsigma - z}d\varsigma,
    		\end{equation}
    		where $[\mu_1]=\mu_1^- -\mu_1^+$.
    		
    		Noticing that $\mu_1$ converges to  $U$ at infinity and since $U$ is entire, then taking the limit $R\to \infty$, it follows that the last term is $U(z)$. In this way, by taking the limit, and reordering we obtain:
    		\begin{equation}
    			\mu_1(z,\lambda) - \frac{1}{2\pi i} \int_{\mathbb{C}} (\widetilde{Q}_\lambda\mu)_1(\varsigma,\lambda)\frac{1}{\varsigma-z} \,d\varsigma\wedge d\overline{\varsigma} +\frac{1}{2\pi i}\int_{\Gamma^+} \frac{(\widetilde{A}_{\lambda}\mu)_1(\varsigma,\lambda)}{\varsigma-z}\,d\varsigma = U(z).
    		\end{equation}
    		
    		This equation together with similar computations for $z \in D_R^{-}$ and showing the case for $\mu_2$ (similarly by taking the adjoint Cauchy-Green formulas) we obtain the desired integral equation.\end{proof}
    	
    	\section{Technical details}
    	
    	\subsection{The choice of the function space}
    	Let  $1<p<\infty$, $R>0$ and $f = \left(\begin{array}{cc}
    	f_1 \\ f_2
    	\end{array}\right)$ be a vector function.
    	
    	To define our spaces we keep in mind the notation introduced in \cite{ltv}. Denote by $L^{\infty}_z(B)$ the space of bounded functions of $z\in \mathbb{C}$ with values in a Banach Space $B$. Thus, picking $B=L^p_{\lambda}(|\lambda|>R)$ we introduce the first space
    	$$\mathcal H_1^p:=\left\{f:  f_1,\,f_2\;\, \textnormal{continuous functions in}\, L^{\infty}_z\left(L^p_{\lambda}(|\lambda|>R)\right) \cap L^{\infty}_z\left(L^{\infty}_{\lambda}(|\lambda|>R)\right)\right\}.$$

    	 To simplify the notation ahead, we introduce the following function space:
    	$$S=\left\{g:\Gamma\times\{\lambda\in\mathbb{C}:|\lambda|>R\}\rightarrow \mathbb{C}^2 \;\; \text{s.t.}\; \sum_{i\in\{1,2\}} \int_{|\lambda|>R}
    	\int_{\Gamma} |g_i(z,\lambda)|^p\,d|z| d\sigma_\lambda < \infty\right\},$$
    	where $d\sigma_{\lambda}$ is the Lebesgue measure in $\mathbb{R}^2$ (similarly we define $d\sigma_{z}$).
    	
    Following the idea of Hardy spaces and to obtain desirable properties at the contour $\Gamma$ we define the second space through the projector $P$ in (\ref{14a}) by: 
    	\begin{align*}
    		\mathcal H_2^p:= \left\{F\in \mathcal{R}(P): \sum_{i\in\{1,2\}} \int_{|\lambda|>R}\int_{\Gamma} \left|F^-_i(z,\lambda)\right|^p\,d|z|d\sigma_{\lambda}<\infty\right\}
    	\end{align*}
    	where by $\mathcal{R}(P)$ we mean the range of the matrix projector $P$ with domain $S$. Hence we have that for $F \in \mathcal{H}^p_2$ there exists a function $f\in S$ such that $F=Pf$ and in $\Gamma$ it fulfills $F^-=f$. Moreover, this allows to consider this space with the norm
    	$$\|F\|_{\mathcal{H}^p_2}^p:=\sum_{i\in\{1,2\}} \int_{|\lambda|>R} \int_{\Gamma} |F^-_i(z,\lambda)|^p\,d|z| d\sigma_\lambda= \sum_{i\in\{1,2\}} \int_{|\lambda|>R} \int_{\Gamma} |f_i(z,\lambda)|^p\,d|z| d\sigma_\lambda.$$
    	
    	Finally, the space we are going to work with is given as $\mathcal H^p = \mathcal H^p_1 + \mathcal H^p_2$  endowed with the norm
    	\begin{equation}\label{2107B}
    	\|t\|_{\mathcal H^p}=\inf_{\substack{u+v=t \\u \in \mathcal H^p_1, v \in \mathcal H^p_2}} \max (\|u\|_{\mathcal H^p_1}, \|v\|_{\mathcal H^p_2 }).
    	\end{equation}
    	Let us remind that the operations of intersection and union of two Banach spaces are correctly defined if all terms can be continuously embedded into a common locally convex space. In our situation this common locally convex space will be a space endowed with the semi-norms 
    	$$
    	 \int_{|\lambda|>R} \int_{ \mathcal O} \frac{1}{|\lambda|^2}|f(z,\lambda)|\,d\sigma_zd\sigma_\lambda.
    	$$
    	
    	If $f \in \mathcal H_1^p$ the embedding is evident. For $f \in \mathcal H_2^p$ we have 
    	$$
    	\|Pf\|_{L^p(\mathcal O)} \leq C \|f\|_{L^p(\Gamma)},
    	$$
    	so that $\left[\|Pf\|_{L^p(\mathcal O)} \right]^p \leq \left[\|f\|_{L^p(\Gamma)}\right]^p$
    	and
    	$$
    	\int \left ( \int_{\mathcal O} |Pf(z)|^p \,d\sigma_z \right ) \,d\sigma_\lambda \leq  \int [\|f\|_{L^p(\Gamma)}]^p\, d\sigma_\lambda = \int_{|\lambda|>R} \int_{\Gamma} |f(z,\lambda)|^p \,d|z|d\sigma_\lambda.
    	$$
    	The boundedness of each semi-norm follows from the continuity of the embedding of $L^p(\mathcal O)$ into $L^1(\mathcal O)$. 
    	
    	\begin{lemma}\label{lemma2107D}
    		The operators $\widehat{P}_\pm: f \rightarrow (Pf)|_{\Gamma^\pm}$ are bounded in the space with norm
    		$$\left[\int_{|\lambda|>R} \int_{\Gamma} |f(z,\lambda)|^p\,dz d\sigma_\lambda\right]^{1/p}.$$
    	\end{lemma}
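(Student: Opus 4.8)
The plan is to recognize the statement as a parameter-dependent version of the classical $L^p$-boundedness of the Cauchy singular integral on a rectifiable curve. Since $P$ is the diagonal matrix $\mathrm{diag}(P_+,P_-)$ and its two blocks are the Cauchy projector and its conjugate, it suffices to bound each scalar block separately, and the variable $\lambda$ will enter only as an inert parameter on which no operator acts. So the whole estimate will be reduced to a single fixed-$\lambda$ inequality on $L^p(\Gamma)$.

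First I would record the Plemelj–Sokhotski jump relations. For $g\in L^p(\Gamma)$ the Cauchy integral $P_+g(w)=\frac{1}{2\pi i}\int_{\Gamma^+}\frac{g(z)}{z-w}\,dz$ is holomorphic off $\Gamma$, and its non-tangential boundary values from the two sides exist for a.e.\ $w\in\Gamma$ and satisfy
\begin{equation*}
(P_+g)\big|_{\Gamma^\pm}(w)=\pm\tfrac12\,g(w)+\tfrac{1}{2\pi i}\,\mathrm{p.v.}\!\int_{\Gamma^+}\frac{g(z)}{z-w}\,dz .
\end{equation*}
The analogous formula holds for $P_-$ with the conjugate kernel $(\bar z-\bar w)^{-1}$ and $d\bar z$. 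Hence, entrywise, each $\widehat P_\pm$ is a multiple of the identity plus (half of) a principal-value Cauchy singular integral operator. The identity part is isometric on $L^p(\Gamma)$, so the entire claim reduces to the boundedness of the singular integral.

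The hard part will be exactly this singular integral bound, which I would invoke rather than prove: by the theorem of Coifman–McIntosh–Meyer, on a Lipschitz curve the principal-value Cauchy singular integral operator is bounded on $L^p(\Gamma)$ for every $1<p<\infty$, with a constant $C_p$ depending only on $p$ and on the Lipschitz character of $\Gamma$. Since $\Gamma$ is in fact a smooth contour here, this is already covered by classical Calderón–Zygmund theory, but stating it for Lipschitz curves keeps the argument robust. The conjugate operator occurring in $P_-$ is handled by the same theorem applied to $\bar g$ (equivalently, by the conjugation symmetry of the kernel), so both diagonal blocks are bounded on $L^p(\Gamma)$ by a constant independent of $\lambda$.

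Finally I would restore the $\lambda$-dependence by Fubini. Because $\widehat P_\pm$ act only in $z$ and the bound $C_p$ is uniform in $\lambda$, integrating the pointwise-in-$\lambda$ estimate $\|\widehat P_\pm f(\cdot,\lambda)\|_{L^p(\Gamma)}\le C_p\|f(\cdot,\lambda)\|_{L^p(\Gamma)}$ over $\{|\lambda|>R\}$ yields
\begin{equation*}
\int_{|\lambda|>R}\!\int_{\Gamma}\big|\widehat P_\pm f(z,\lambda)\big|^p\,d|z|\,d\sigma_\lambda \;\le\; C_p^{\,p}\int_{|\lambda|>R}\!\int_{\Gamma}|f(z,\lambda)|^p\,d|z|\,d\sigma_\lambda ,
\end{equation*}
which is precisely the asserted boundedness in the stated norm. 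The only genuinely deep ingredient is the Cauchy-integral bound; the Plemelj formulas and the Fubini step are routine bookkeeping.
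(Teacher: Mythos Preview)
Your proposal is correct and follows essentially the same approach as the paper: reduce to the classical $L^p(\Gamma)$-boundedness of the Cauchy projector for each fixed $\lambda$, then integrate over $\{|\lambda|>R\}$. The paper's proof simply invokes ``continuity of Cauchy projectors in $L^p(\Gamma)$'' as a known fact and proceeds directly to the Fubini step, whereas you additionally unpack this via the Plemelj--Sokhotski formulas and the Coifman--McIntosh--Meyer theorem; this is extra detail but not a different argument.
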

    	
    	\begin{proof}
    		 During the proof the sign $\pm$ in the projectors will be omitted. From the continuity of Cauchy projectors in $L^p(\Gamma)$ follows 
    		$$
    		\|\widehat{P} f \|_{L^p(\Gamma)} \leq C \|f \|_{L^p(\Gamma)}.
    		$$
    		and therefore
    		$$
    		\left (\|\widehat{P} f \|_{L^p(\Gamma)} \right )^p\leq C^p \left ( \|f \|_{L^p(\Gamma)} \right )^p.
    		$$
    		Finally
    		\begin{small}
    			$$\|P\widehat{P}f\|_{\mathcal H^p_2}= \int_{|\lambda|>R} \left (\|\widehat{P} f \|_{L^p(\Gamma)} \right )^p \,d\sigma_\lambda \leq C^p \int_{|\lambda|>R} \left( \|f \|_{L^p(\Gamma)} \right)^p \,d\sigma_\lambda = C^p \int_{|\lambda|>R} \int_{\Gamma} |f(z,\lambda)|^p \,d|z|d\sigma_\lambda .$$
    		\end{small}
    	\end{proof}
    	\begin{lemma}\label{lemma2107C}
    		Let $u \in \mathcal H^p_1$. Then $P(u|_{\Gamma}) \in \mathcal H^p_2$.
    	\end{lemma}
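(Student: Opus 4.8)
The plan is to reduce the claim to two facts: that the trace $u|_{\Gamma}$ already lies in the space $S$ on which $P$ is defined, and that the boundary operator $\widehat{P}_-$ from Lemma~\ref{lemma2107D} then controls the $\mathcal H^p_2$-norm of $P(u|_{\Gamma})$. The only genuine analytic input is an interchange of the order of integration together with the finiteness of the arclength of $\Gamma$.

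First I would check that $u|_{\Gamma}$ is well defined and belongs to $S$. Since $u\in\mathcal H^p_1$ its components $u_1,u_2$ are continuous in $z$, so their restriction to the contour $\Gamma$ is a well-defined function on $\Gamma\times\{|\lambda|>R\}$, jointly measurable because of the continuity in $z$. To estimate its $S$-norm I would apply Tonelli's theorem to the nonnegative integrand and swap the two integrations:
\[
\sum_{i\in\{1,2\}}\int_{|\lambda|>R}\int_{\Gamma}|u_i(z,\lambda)|^p\,d|z|\,d\sigma_\lambda
=\sum_{i\in\{1,2\}}\int_{\Gamma}\Big(\int_{|\lambda|>R}|u_i(z,\lambda)|^p\,d\sigma_\lambda\Big)\,d|z|.
\]
The inner integral is exactly $\|u_i(z,\cdot)\|_{L^p_\lambda(|\lambda|>R)}^p$, and the defining membership $u\in L^\infty_z\big(L^p_\lambda(|\lambda|>R)\big)$ bounds it uniformly in $z$. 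Pulling this supremum out of the $z$-integral leaves $\int_\Gamma d|z|=\operatorname{length}(\Gamma)$, which is finite since $\Gamma$ is a compact smooth contour. Hence the whole expression is bounded by $\operatorname{length}(\Gamma)\sum_{i}\sup_{z}\|u_i(z,\cdot)\|_{L^p_\lambda(|\lambda|>R)}^p<\infty$, so that $u|_{\Gamma}\in S$.

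Once $u|_{\Gamma}\in S$, the function $P(u|_{\Gamma})$ lies in $\mathcal R(P)$ by the very definition of that range (the domain of the matrix projector $P$ of \eqref{14a} is $S$), so the first requirement in the definition of $\mathcal H^p_2$ is met. It remains to show that its $\mathcal H^p_2$-norm, which is computed through the interior trace, is finite. Here I would invoke Lemma~\ref{lemma2107D} in its $\widehat P_-$ form: the operator $f\mapsto (Pf)|_{\Gamma^-}$ is bounded for the $L^p(\Gamma\times\{|\lambda|>R\})$ norm, whence
\[
\|P(u|_{\Gamma})\|_{\mathcal H^p_2}^p
=\sum_{i\in\{1,2\}}\int_{|\lambda|>R}\int_{\Gamma}\big|\big(P(u|_{\Gamma})\big)^-_i\big|^p\,d|z|\,d\sigma_\lambda
\le C^p\,\|u|_{\Gamma}\|_S^p<\infty ,
\]
the last bound being the estimate of the previous paragraph. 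This gives $P(u|_{\Gamma})\in\mathcal H^p_2$.

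I expect no serious obstacle: the argument collapses to Tonelli's theorem, the finite length of $\Gamma$, and the already-established boundedness of the Cauchy projector traces. The only points requiring a little care are justifying the joint measurability of $u|_\Gamma$ and the interchange of integrals — both of which follow from the continuity of $u$ in $z$ and the positivity of the integrand — and making sure that the uniform bound exploited here is the one coming from the $L^\infty_z(L^p_\lambda)$ factor of $\mathcal H^p_1$, the $L^\infty_\lambda$ factor playing no role in this estimate.
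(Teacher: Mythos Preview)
Your proposal is correct and follows essentially the same route as the paper: both use the $L^\infty_z(L^p_\lambda)$ membership of $u$ together with the finite length of $\Gamma$ to conclude that $u|_\Gamma$ has finite $L^p$-norm on $\Gamma\times\{|\lambda|>R\}$, which is the content of $P(u|_\Gamma)\in\mathcal H^p_2$. The only cosmetic difference is that you pass through Lemma~\ref{lemma2107D} to bound $\|P(u|_\Gamma)\|_{\mathcal H^p_2}$ by $\|u|_\Gamma\|_S$, whereas the paper reads this off directly from the second equality in the definition of the $\mathcal H^p_2$-norm (which identifies $\|Pf\|_{\mathcal H^p_2}$ with $\|f\|_S$), making the appeal to the boundedness of $\widehat P_-$ unnecessary.
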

    	\begin{proof}
    		From the definition of $\mathcal H^p_1$, combined with the fact that $u$ is a continuous function, we get
    		$$
    		\|u\|_{L^p_\lambda} \in L^\infty_z(\Gamma).
    		$$
    		Since $\Gamma$ is a bounded set, the $L^p$ norm does not exceed (up to a constant) the $L^\infty$ norm and, therefore
    		$$
    		\|\|u\|_{L^p_\lambda}\|_{L^p_z(\Gamma)} \leq C \|u\|_{\mathcal H^p_1}.
    		$$
    		Now we just note the left-hand side of the above inequality is the norm  $\mathcal H^p_2$ norm. 
    	\end{proof}
    	
    	\subsection{Analysis of the Lippmann-Schwinger equation}
    	Multiplying equation (\ref{0807A}) by $I+D\widetilde{Q}_\lambda $  we get
    	\begin{equation}\label{0907H}
    	(I+M)\mu= (I+D\widetilde{Q}_\lambda) \left(
    	\begin{array}{c}
    	U \\
    	0 \\
    	\end{array}
    	\right)
    	\end{equation}
    	where
    	\begin{equation}\label{0907I}
    	M=P\widetilde{A}_\lambda +D\widetilde{Q}_\lambda P\widetilde{A}_\lambda -D\widetilde{Q}_\lambda D\widetilde{Q}_\lambda.
    	\end{equation}
    	
    	\begin{lemma}\label{lemma2107A}
    		Let $\,\textnormal{Re}(\gamma)\geq c >0$ and $\widetilde{A}_{\lambda}$ in $L^{\infty}(\Gamma)$. Then the operators $D\widetilde{Q}_\lambda P\widetilde{A}_\lambda ,\,D\widetilde{Q}_\lambda D\widetilde{Q}_\lambda$ are bounded in  $\mathcal H^p,\, p>1$. 
    		
    		Moreover, if $R>0$ is large enough they are contractions and if $\alpha-1$ is small enough in $L^{\infty}(\Gamma)$ then $P\widetilde{A}_{\lambda}$ is a contraction in $\mathcal{H}^p, \,p>1$.
    	\end{lemma}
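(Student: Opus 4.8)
The plan is to estimate the three operators separately and then assemble the conclusion for $M$ by the triangle inequality, so that everything reduces to norm bounds for the two building blocks $P\widetilde{A}_\lambda$ and $D\widetilde{Q}_\lambda$ on $\mathcal{H}^p=\mathcal H^p_1+\mathcal H^p_2$. Once we know that $D\widetilde{Q}_\lambda$ is bounded on $\mathcal H^p$ with operator norm $\omega(R)\to 0$ as $R\to\infty$, and that $P\widetilde{A}_\lambda$ is bounded with norm controlled by $\|\widetilde{A}_\lambda\|_{L^\infty(\Gamma)}$ uniformly in $R$, then the boundedness of $D\widetilde{Q}_\lambda P\widetilde{A}_\lambda$ and of $D\widetilde{Q}_\lambda D\widetilde{Q}_\lambda$, as well as their contractivity for large $R$, follow at once by submultiplicativity: $\|D\widetilde{Q}_\lambda P\widetilde{A}_\lambda\|\le \omega(R)\,\|P\widetilde{A}_\lambda\|$ and $\|D\widetilde{Q}_\lambda D\widetilde{Q}_\lambda\|\le \omega(R)^2$.

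First I would dispose of $P\widetilde{A}_\lambda$. The matrix $\widetilde{A}_\lambda$ acts by multiplication on $L^p(\Gamma)$, and since the oscillatory factors $e^{\pm i\,\textnormal{Im}[\lambda(z-w)^2/2]}$ have modulus one, its norm is bounded by the $L^\infty(\Gamma)$ size of its entries, uniformly in $\lambda$. Writing $\alpha=1+\delta$ one has $\alpha+\tfrac1\alpha-2=\tfrac{\delta^2}{1+\delta}=O(\delta^2)$ and $\alpha-\tfrac1\alpha=\tfrac{\delta(2+\delta)}{1+\delta}=O(\delta)$, so $\|\widetilde{A}_\lambda\|_{L^\infty(\Gamma)}\lesssim\|\alpha-1\|_{L^\infty(\Gamma)}$. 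Composing with $P$, whose Cauchy-projector entries are bounded on $L^p(\Gamma)$ with a $\lambda$-independent constant (\lemref{lemma2107D}) and which lands in $\mathcal H^p_2$, and using \lemref{lemma2107C} to take the trace on $\Gamma$ of the $\mathcal H^p_1$ component, one gets $\|P\widetilde{A}_\lambda\|_{\mathcal H^p\to\mathcal H^p}\le C\,\|\alpha-1\|_{L^\infty(\Gamma)}$, which is below $1$ exactly when $\alpha-1$ is small. This is the last assertion.

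The heart of the matter is $D\widetilde{Q}_\lambda$. Boundedness on $\mathcal H^p$ is soft: the solid Cauchy kernel $1/(\varsigma-z)$ is locally integrable, $Q_{12},Q_{21}\in L^\infty$ are supported in $\overline{\mathcal O}$, and the oscillation has modulus one, so Minkowski's inequality in the $\lambda$-variable yields $\|D\widetilde{Q}_\lambda f(z,\cdot)\|_{L^p_\lambda}\le C\|f\|$ uniformly in $z$, with output continuous and bounded in $z$ (hence in $\mathcal H^p_1$); for an $\mathcal H^p_2$ input one first controls the bulk values by $\|Pg\|_{L^p(\mathcal O)}\le C\|g\|_{L^p(\Gamma)}$. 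The decisive gain comes from the phase: a direct computation gives $\bar\partial_\varsigma e^{-i\,\textnormal{Im}[\lambda(\varsigma-w)^2/2]}=\tfrac{\bar\lambda}{2}(\bar\varsigma-\bar w)\,e^{-i\,\textnormal{Im}[\lambda(\varsigma-w)^2/2]}$, so that away from the stationary point $\varsigma=w$ one integration by parts against $\bar\partial^{-1}$ extracts a factor $1/\bar\lambda$. The extra derivative thereby produced must fall on a regular factor, and this is precisely why the composites, rather than $D\widetilde{Q}_\lambda$ in isolation, are the objects we estimate: in $D\widetilde{Q}_\lambda D\widetilde{Q}_\lambda$ the inner $D\widetilde{Q}_\lambda$ already lies in $W^{1,p}_{\mathrm{loc}}$ by the smoothing of $\bar\partial^{-1}$ (its Beurling transform being $L^p$-bounded for $1<p<\infty$), while in $D\widetilde{Q}_\lambda P\widetilde{A}_\lambda$ the inner $P\widetilde{A}_\lambda$ is holomorphic off $\Gamma$. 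One thus obtains a pointwise-in-$z$ bound of size $\lesssim|\lambda|^{-1}$ times an $L^p_\lambda$-integrable factor, and since $|\lambda|^{-1}\le R^{-1}$ on $|\lambda|>R$ the resulting $\mathcal H^p$-norm is $\lesssim R^{-1}\|\cdot\|$, the sought contraction with $\omega(R)=O(R^{-1})$.

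The main obstacle, and the place where the $L^p$-based two-dimensional Laplace--Hausdorff--Young machinery is genuinely needed, is making this last step rigorous without any spatial regularity of the original $\mu$: the cancellation has to be read off in the $L^p_\lambda$-norm rather than pointwise, and the integration by parts degenerates at the critical point $\varsigma=w$, where the factor $(\bar\varsigma-\bar w)^{-1}$ is singular. I would therefore split $\mathcal O$ into a shrinking disc about $w$, where a crude measure estimate using only $|e^{-i\theta}|=1$ and local integrability of the kernel suffices, and its complement, where the change of variables $\xi=(\varsigma-w)^2$ turns the oscillation into a genuine Fourier kernel $e^{-i\langle\lambda,\cdot\rangle}$ in $\lambda$ and the Hausdorff--Young inequality converts the $L^p$ input into $L^p_\lambda$ control; balancing the radius of the disc against $R$ then yields $\omega(R)\to 0$.
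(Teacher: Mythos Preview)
Your treatment of $P\widetilde{A}_\lambda$ matches the paper's: multiplication by $\widetilde{A}_\lambda$ is bounded on $L^p(\Gamma)$ uniformly in $\lambda$ with norm $O(\|\alpha-1\|_{L^\infty(\Gamma)})$, and composing with the Cauchy projector (bounded via \lemref{lemma2107D}, landing in $\mathcal H^p_2$) yields a contraction precisely when the jump is small.

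For the composites, however, the paper takes a shorter and rather different route. Instead of integrating by parts and invoking the Laplace--Hausdorff--Young machinery, it writes $D\widetilde{Q}_\lambda P\widetilde{A}_\lambda u$ (after Fubini) as an integral over $\Gamma$ against the explicit kernel
\[
G_1(z,z_2,\lambda,w)=(2\pi i)^{-2}\int_{\mathcal O}\frac{e^{-i\,\textnormal{Im}[\lambda(z_1-w)^2]/2}}{z_1-z}\,\frac{Q_{12}(z_1)}{\bar z_2-\bar z_1}\,d\sigma_{z_1},
\]
and then cites the stationary-phase argument of Lemma~2.1 in \cite{ltv} to obtain the \emph{pointwise-in-$\lambda$} bound $\sup_{z,\,|\lambda|>R}\|G(z,\cdot,\lambda,w)\|_{L^q_{z_2}(\Gamma)}\le 1/R$. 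A single H\"older inequality then gives the $\mathcal H^p$ contraction directly. The same reference handles $D\widetilde{Q}_\lambda D\widetilde{Q}_\lambda$ on $\mathcal H^p_1$, and the paper asserts that the $\mathcal H^p_2$ case is analogous. No $W^{1,p}$-smoothing, Beurling transform, or Hausdorff--Young is used here; that machinery is reserved for \lemref{2310F} later. What your approach buys is a self-contained argument not relying on \cite{ltv}; what the paper's buys is a uniform sup-in-$\lambda$ bound rather than just an $L^p_\lambda$ bound, which makes the passage to both $\mathcal H^p_1$ and $\mathcal H^p_2$ immediate.

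One caution about your write-up: the opening paragraph asserts that $D\widetilde{Q}_\lambda$ by itself has operator norm $\omega(R)\to0$ on $\mathcal H^p$, so that the composites would follow by submultiplicativity. This is false---the oscillation has modulus one and the solid Cauchy transform carries no $\lambda$-dependence---and you correctly abandon the claim two paragraphs later when you say the gain appears only in the composites. The plan as stated is therefore internally inconsistent; you should remove the submultiplicativity sentence and go straight to the composite estimate.
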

    	\begin{proof} In order to estimate $\|(D\widetilde{Q}_\lambda P\widetilde{A}_\lambda)t\|_{\mathcal H^p}$ and $\|(D\widetilde{Q}_\lambda D\widetilde{Q}_\lambda )t\|_{\mathcal H^p}$ (recall Definition \ref{2107B}) we consider the representation $t=u+v$ where the infimum is (almost) achieved. It is easy to see that the desired estimate follows from the fact that these operators are a contraction in each of the spaces, $\mathcal H_1^p$ and $\mathcal H_2^p$. This fact can be shown as follows. 
    	
    	In Lemma 2.1 of \cite{ltv} it was proved that the operator $D\widetilde{Q}_\lambda D\widetilde{Q}_\lambda$ is bounded in $\mathcal H^p_1$. The proof that it is also a contraction in $\mathcal{H}^p_2$ and the statement for $D\widetilde{Q}_\lambda P\widetilde{A}_\lambda$ follows in a similar manner. 
    	
    	Hereby, we show the case for $D\widetilde{Q}_\lambda P \widetilde{A}_{\lambda}$. By definition we have:
    	$$
    	D\widetilde{Q}_\lambda P\widetilde{A}_{\lambda} u(z)=\left\{\begin{array}{ccc}
    	\int_{\Gamma} [\widetilde{A}_{\lambda}u]_2(z_2)G_1(z,z_2,\lambda,w)\,dz_2 \\
    	\int_{\Gamma} [\widetilde{A}_{\lambda}u]_1(z_2)G_2(z,z_2,\lambda,w)\,dz_2
    	\end{array}\right.$$
    	where 
    	\begin{equation}\label{AAA}
    	G(z,z_2,\lambda,w)=\left(\begin{array}{cc}
    	G_1\\
    	G_2
    	\end{array}\right) =\left\{\begin{array}{cc}
    	(2\pi i)^{-2}\int_{\mathcal O}
    	\frac{e^{-i\,\textnormal{Im}[\lambda(z_1-w)^2]/2}}{{z_1}-{z}}  \frac{{Q}_{12}(z_1)}{\bar{z}_2-\bar{z}_1} \, d{\sigma_{z_1}}\\
    	(2\pi i)^{-2}\int_{\mathcal O}
    	\frac{e^{i\,\textnormal{Im}[\lambda(z_1-w)^2]/2}}{{\bar{z}_1}-\bar{z}}  \frac{{Q}_{21}(z_1)}{{z}_2-{z}_1} \, d{\sigma_{z_1}}
    	\end{array}\right. .
    	\end{equation}
    	 
    	By following a similar estimation on the proof of Lemma 2.1 of \cite{ltv} we obtain by the stationary phase approximation:
    	$$
    	\sup_{\substack{z\\|\lambda|>R}} \|G_i(z,\cdot,\lambda,w)\|_{L^q_{z_2}(\Gamma)} \leq \frac{1}{R}, ~  \quad 1/p+1/q=1 ~ \text{and} ~i=1,2.
    	$$
    	Thus
    	$$
    	|D\widetilde{Q}_\lambda P \widetilde{A}_{\lambda}u|(z)\leq \|G(z,\cdot,\lambda,w)\|_{L^q_{z_2}(\Gamma)}  \|\widetilde{A}_{\lambda}u\|_{L^p_{z_2}(\Gamma)}.
    	$$
    	Then, we have for 
    	$$
    	\|D\widetilde{Q}_\lambda P\widetilde{A}_{\lambda} u(z)\|_{L^p_\lambda} \leq  \|G(z,\cdot,\lambda,w)\|_{L^q_{z_2}(\Gamma)} \|\widetilde{A}_{\lambda}\|_{L^\infty(\Gamma)} \|u\|_{\mathcal H^p_2}
    	$$
    	where we used the fact that $u \in \mathcal H^p_2$ is the same as $ \|u\|_{L^p_{z_2}(\Gamma)} \in L^p_\lambda$. The final estimate follows from the definitions of both spaces and the above uniform bound on $G_i$.
    	
    	If we take $R>0$ large enough then it follows that $D\widetilde{Q}_{\lambda}D\widetilde{Q}_{\lambda}$ and $D\widetilde{Q}_{\lambda}P\widetilde{A}_{\lambda}$ are contractions in $\mathcal{H}^p$ as long as $\|\widetilde{A}\|_{L^{\infty}(\Gamma)}$ is finite.
    	
    	By the definition of $\mathcal{H}^p$ the boundedness of $P\widetilde{A}_{\lambda}$ follows from the usual $L^p$ boundedness. Since this operator will not have the same dependence on $\lambda$ as the others we need the jump to be close enough to $1$ so that the supremum norm in $z$ of $\widetilde{A}_{\lambda}$ on $\Gamma$ is small enough and possibilitates the norm of the whole operator to be less than $1$.
    	
    	A rough estimate for this norm is given in terms of the jump by:
    	 $$\|\widetilde{A}_{\lambda}\|_{L^{\infty}(\Gamma)}\leq 2\left|\alpha-1\right|\left(1+\left|\frac{1}{\alpha}\right|\right)\leq 4\epsilon,$$
    	where $\epsilon>0$ is an upper bound for
    	$\left|\alpha-1\right|$.
    	Hence for $P\widetilde{A}_{\lambda}$ to be a contraction on $\mathcal{H}^p$ we need that
    	$$|\alpha-1| \leq \frac{1}{4\|P\|_{\mathcal{H}^p}}.$$
    	\end{proof}
    	
    	\subsection{Enrichment of the set of CGO incident waves}
    	Let $w \in \mathcal O$ be a fixed point. For the asymptotics (\ref{8JunA}) we can take any entire function. In our approach, we take this entire functions to be:
    	\begin{equation}\label{0907A}
    	U(z,w,\lambda)=e^{\ln |\lambda| \lambda_{\mathcal O}(z-w)^2},
    	\end{equation}
    	where $z\in \mathbb{C}$ and $\lambda_{\mathcal O}$ is a parameter.
    	\\
    	These functions lead us to the concept of admissible points. 
    	\\ 
    	
    	We recall here their definition:     	
    		We say that a point $w \in \mathcal{O}$ is an {\it admissible point}, if there is a number $\lambda_{\mathcal O}\in\mathbb{C}$ such that
    		\begin{align*}\label{0907B}
    		A&:=\sup_{z \in \overline{\mathcal O}} \textnormal{Re}[ \lambda_{\mathcal O}(z-w)^2]< 1/2,\\
    		B&:= \sup_{z \in \overline{\mathcal D}} \textnormal{Re}[ \lambda_{\mathcal O}(z-w)^2] <  -1/2.
    		\end{align*}
    		Moreover, if $w$ is an admissible point and $A$ and $B$ fulfills
    		$A=1/2-\epsilon_1, \, B=-1/2-\epsilon_2,$ with $\epsilon_2-\epsilon_1>0$, we further say that $w$ is a proper admissible point.

    	{\bf Note:} The set of admissible points is not empty. In order to see this we consider a boundary point $w_0 \in \partial \mathcal O$ which belongs also to the convex hull of $\mathcal O$. It is easy to see that all interior points $w \in \mathcal O$ near the $w_0$ would be admissible. 
    	
    	We will not try to give a general geometric description of admissible points. Instead, we are only aiming to show the viability of the concept.   
    	
    	Denote
    	\begin{equation}\label{1107A}
    	f=\mu -  (I+D\widetilde{Q}_\lambda)\left(
    	\begin{array}{c}
    	U \\
    	0 \\
    	\end{array}
    	\right),
    	\end{equation}
    	where $\mu$ is defined in (\ref{200918A}).
    	
    	The vector $f$ satisfies the equation
    	\begin{equation}\label{1007P}
    	(I+M)f= -M(I+D\widetilde{Q}_\lambda) \left(
    	\begin{array}{c}
    	U \\
    	0 \\
    	\end{array}
    	\right).
    	\end{equation}
    	We know already that for $R>0$ large enough the operator in the left-hand side of this equation is a contraction in $\mathcal H^p,\, p>1$ and below we show that in fact we have for the right-hand side:
    	\begin{equation}\label{1007Q}
    	\frac{1}{|\lambda|^A} M(I+D\widetilde{Q}_\lambda) \left(
    	\begin{array}{c}
    	U \\
    	0 \\
    	\end{array}
    	\right) \in \mathcal H^p, \quad p>2.
    	\end{equation}
    	Therefore, we get the following statement
    	\begin{lemma}\label{lemma1107A}
    		For any $p>2$ and $R$ large enough such that $U$ is given in terms of a proper admissible point $w$ we have
    		\begin{equation}\label{0907F}
    		\frac{1}{|\lambda|^A} \left [ \mu -  (I+D\widetilde{Q}_\lambda)\left(
    		\begin{array}{c}
    		U \\
    		0 \\
    		\end{array}
    		\right) \right ] \in \mathcal H^p.
    		\end{equation}
    	\end{lemma}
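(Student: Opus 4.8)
The plan is to deduce the lemma from the integral equation (\ref{1007P}) together with the invertibility of $I+M$ on $\mathcal H^{p}$, thereby reducing everything to the single estimate (\ref{1007Q}). Write $\Phi_{0}=(U,0)^{t}$ and $\Psi=(I+D\widetilde{Q}_{\lambda})\Phi_{0}$, so that (\ref{1007P}) reads $(I+M)f=-M\Psi$. The first, essentially bookkeeping, observation is that multiplication by the scalar $|\lambda|^{-A}$ is a multiplication operator in the spectral variable $\lambda$ alone, whereas every building block of $M$ (the projectors $P$, the solid transforms $D$, and the matrix multipliers $\widetilde{A}_{\lambda}$, $\widetilde{Q}_{\lambda}$) acts in the spatial variable $z$ with $\lambda$ frozen; hence $|\lambda|^{-A}$ commutes with $M$ and therefore with $(I+M)^{-1}$. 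By \lemref{lemma2107A}, for $R$ large and $\sqrt{\gamma^{-}/\gamma^{+}}-1$ small enough one has $\|M\|<1$, so $I+M$ is invertible on $\mathcal H^{p}$ and $|\lambda|^{-A}f=-(I+M)^{-1}\big(|\lambda|^{-A}M\Psi\big)$. It thus suffices to prove (\ref{1007Q}).

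Next I would expand $M\Psi=M\Phi_{0}+MD\widetilde{Q}_{\lambda}\Phi_{0}$ into its constituent terms and classify them by their innermost contact with $\Phi_{0}$: either $\widetilde{A}_{\lambda}$ acts first on $\Phi_{0}$ (the terms built from $P\widetilde{A}_{\lambda}\Phi_{0}$), or the solid transform acts first (the remaining terms, each built from the oscillatory Cauchy integral $\partial^{-1}[Q_{21}\,e^{i\,\mathrm{Im}[\lambda(\cdot-w)^{2}/2]}U]$ or its conjugate, possibly iterated once). In both families the outer factors are compositions of $D\widetilde{Q}_{\lambda}$ and $P\widetilde{A}_{\lambda}$, which are bounded on $\mathcal H^{p}$, respectively from the relevant $\Gamma$-trace space into $\mathcal H^{p}_{2}$, by exactly the kernel estimates established in the proof of \lemref{lemma2107A}. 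After commuting $|\lambda|^{-A}$ inward, the whole problem collapses to controlling the two innermost objects $|\lambda|^{-A}P\widetilde{A}_{\lambda}\Phi_{0}$ and $|\lambda|^{-A}D\widetilde{Q}_{\lambda}\Phi_{0}$.

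For the first family proper admissibility enters, and this is the clean part. Since $\widetilde{A}_{\lambda}$ is supported on $\Gamma\subset\overline{\mathcal D}$, the definition of admissible point gives $|U(z,w,\lambda)|=|\lambda|^{\mathrm{Re}[\lambda_{\mathcal O}(z-w)^{2}]}\le|\lambda|^{B}$ for $z\in\Gamma$, hence $|\widetilde{A}_{\lambda}\Phi_{0}|\le C|\lambda|^{B}$ there. Dividing by $|\lambda|^{A}$ and using the $\mathcal H^{p}_{2}$-norm reduces the $\lambda$-integral to $\int_{|\lambda|>R}|\lambda|^{(B-A)p}\,d\sigma_{\lambda}$, finite exactly when $p>2/(A-B)$. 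Proper admissibility means $A-B=1+(\epsilon_{2}-\epsilon_{1})>1$, so $2/(A-B)<2$ and the integral converges for every $p>2$; this is precisely why the strict inequality $\epsilon_{2}-\epsilon_{1}>0$ is imposed.

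The main obstacle is the second family, where $U$ is integrated over all of $\mathcal O$, on which it only satisfies the much weaker bound $|U|\le|\lambda|^{A}$, against the rapidly oscillating kernel $e^{i\,\mathrm{Im}[\lambda(\varsigma-w)^{2}/2]}$. Here one cannot bound the amplitude crudely: the required $L^{p}_{\lambda}$-decay must be extracted from the oscillation. Changing variables $\eta=(\varsigma-w)^{2}/2$ turns the phase $\mathrm{Im}[\lambda\eta]$ into a genuine pairing of $\lambda$ with $\eta$, so the innermost integral becomes a mixed transform: Fourier in the angular part of $\lambda$ and Laplace (through the $\lambda$-dependent weight $|\lambda|^{\mathrm{Re}[2\lambda_{\mathcal O}\eta]}$) in the radial part. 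Bounding its $L^{p}_{\lambda}$-norm for $p>2$ is exactly the role of the two-dimensional Laplace-transform analogue of the Hausdorff--Young inequality announced in the introduction; this is the adaptation of Lemma 2.1 of \cite{ltv} from the classical incident wave to the enriched wave $U$, and the degeneracy of the map $\eta=(\varsigma-w)^{2}/2$ at the interior stationary point $\varsigma=w$, which coincides with the saddle of the weight, is the delicate point. Granting this estimate, each term of the second family, divided by $|\lambda|^{A}$, acquires the decay needed to lie in $\mathcal H^{p}$ for $p>2$, and assembling the two families establishes (\ref{1007Q}) and hence the lemma.
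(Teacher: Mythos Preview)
Your proposal is correct and follows essentially the same route as the paper: both arguments reduce (\ref{0907F}) to the estimate (\ref{1007Q}) via the contraction property of $M$ from \lemref{lemma2107A}, then split the terms of $M(I+D\widetilde{Q}_\lambda)\Phi_0$ according to whether $\widetilde{A}_\lambda$ or $D\widetilde{Q}_\lambda$ touches $\Phi_0$ first, handling the former by the pointwise bound $|U|\le|\lambda|^{B}$ on $\Gamma$ (this is exactly the paper's computation showing $|\lambda|^{-A}\Phi_0\in\mathcal H^p_2$, where proper admissibility enters) and the latter by \lemref{2310F}, the Hausdorff--Young analogue for the Laplace transform (the paper displays the representative term $|\lambda|^{-A}(D\widetilde{Q}_\lambda)^3\Phi_0$). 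Your write-up is in fact a bit more careful than the paper's in making explicit the commutation of $|\lambda|^{-A}$ with $M$ and in articulating why the exponent $B-A=-1-(\epsilon_2-\epsilon_1)$ forces $p>2$ to suffice.
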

    	\begin{proof}
    		We start by showing that 
    		\begin{equation}\label{Hp}\frac{1}{|\lambda|^A}\U \in \mathcal{H}_2^p
    		\end{equation} and $$\frac{1}{|\lambda|^A}MD\tilde{Q}_{\lambda}\U \in \mathcal{H}_1^p.$$ Since $M$ is a contraction for $R > 0$ big enough we are going to obtain $(\ref{1007Q})$ and the result will immediately follows for $p > 2$.
    		
    		To show (\ref{Hp}) we refer to the following simple estimate
    		\begin{align*}
    		\Bigg[\int_{|\lambda| >  R}\int_{\Gamma}\Bigg|\frac{1}{|\lambda|^A}e^{\ln |\lambda|\lambda_s(z-w)^2}\Bigg|\,d|z|\,d\sigma_{\lambda}\Bigg]^{1/p} 	
    		&\leq \Bigg[\int_{|\lambda| >  R}\int_{\Gamma}\Bigg|\frac{1}{|\lambda|^A} |\lambda|^B\Bigg|^p\,d|z|\,d\sigma_{\lambda}\Bigg]^{1/p}
    		\\
    		&
    		= \Bigg[\int_{|\lambda| >  R}\int_{\Gamma}\Bigg|\frac{1}{|\lambda|^{1+(\epsilon_2-\epsilon_1)}} \Bigg|^p\,d|z|\,d\sigma_{\lambda}\Bigg]^{1/p}  < \infty
    		\end{align*}
    		
    		For the second statement we need to dismantle $M$ into its various parts and show that the statement holds for each one of them. The trick is always the same, so we will only show one of the computations, namely the one corresponding to the term $\frac{1}{|\lambda|^A}(D\tilde{Q}_{\lambda})^3\U$. By Lemma \ref{2310F} we get
    		
    		\begin{align*}
    		&\sup_{z \in \mathcal O} \Bigg[\int_{|\lambda|>R}\Bigg|\frac{1}{|\lambda|^A}\int_{ \mathcal{O}}\frac{e^{i\,\textnormal{Im}(\lambda(z_1-w)^2)}}{\overline{z_1-z}}Q_{21}(z_1)\int_{ \mathcal{O}}\frac{e^{-i\,\textnormal{Im}(\lambda(z_2-w)^2)}}{z_2-z_1}Q_{12}(z_2) \cdot
    		\\
    		&
    		\hspace{7cm}\cdot\int_{ \mathcal{O}}\frac{e^{\rho(z_3)}}{z_3-z_2}Q_{21}(z_3)\,d\sigma_{z_3}\,d\sigma_{z_2}\,d\sigma_{z_1}\Bigg|^p\,d\sigma_{\lambda}\Bigg]^{1/p} \\
    		&\leq \sup_{z \in \mathcal O} \Bigg[C ||Q||_{L^{\infty}}^3\int_{ \mathcal{O}}
    		\int_{\mathcal{O}}\frac{1}{|z_1-z|}\frac{1}{|z_2-z_1|}\frac{1}{|z_2-w|^{1-\delta}} \, d\sigma_{z_2}\,d\sigma_{z_1}\Bigg] < C'
    		\end{align*}
    		
    		Thus, the result $(\ref{1007Q})$ follows, and in consequence also (\ref{0907F}) holds from (\ref{1107A}) and (\ref{1007P}).

    	\end{proof}
    	
    	\subsection{Scattering data and reconstruction of the potential in admissible points}
    	Let $w\in\mathcal{O}$ be an admissible point. We consider the function
    	\begin{equation}\label{0907C}
    	e^{\overline{\ln |\lambda| \lambda_s(z-w)^2}},
    	\end{equation}
    	where the number $\lambda_s$ is chosen as
    	\begin{equation}\label{0907D}
    	\sup_{z \in \overline{\mathcal O}} \textnormal{Re}[ \lambda_s(z-w)^2]<1/2, \quad \sup_{z \in \overline{\mathcal D}} \textnormal{Re}[ \lambda_s(z-w)^2]<-1/2.
    	\end{equation}
     
    	A point $w$ can be admissible to more than one spectral parameter. To define the scattering data we want to use the above exponentials depending on $\lambda_s$. Since $\mu$ fulfills an asymptotic with the spectral parameter being $\lambda_{\mathcal O}$ we also define our scattering data with respect to this spectral parameter, i.e., $\lambda_s=\lambda_{\mathcal O}$.
    	However, this is not a requirement and $\lambda_s$ could have been one of the other parameters which makes $w$ admissible. We just fix it like this to simplify the proofs ahead.
    	
    	Consider now our scattering data
    	\begin{equation}\label{1007S}
    	h(\lambda,w)= \int_{\partial \mathcal O} \overline{e^{\ln |\lambda| \lambda_s(z-w)^2}} \mu_2(z)d\bar{z}.
    	\end{equation}
    	Using Green's theorem
    	$$
    	\int_{\partial \mathcal O} f \, d\bar{z} = -2i \int_{\mathcal O} \partial {f} \, d{\sigma_{z}}
    	$$
    	we can see that
    	\begin{equation}\label{0907E}
    	h(\lambda,w)= \int_{\Gamma^+} \overline{e^{\ln |\lambda| \lambda_s(z-w)^2}} \mu_2(z)d\bar{z} + \int_{\mathcal O \setminus\overline{\mathcal{D}}}
    	\overline{e^{\ln |\lambda| \lambda_s(z-w)^2}} e^{-i\,\textnormal{Im} [\lambda (z-w)^2]/2} q_{21}(z)\mu_1(z) d\sigma_z.
    	\end{equation}
    	
    	This formula gives raise to an operator that we denote by $T$ and  it is defined by
    	
    	$$
    	T[G](\lambda)= \int_{\mathcal O \setminus\overline{\mathcal{D}}}
    	\overline{e^{\ln |\lambda| \lambda_s(z-w)^2}} e^{-i\,\textnormal{Im} [\lambda (z-w)^2]/2} q_{21}(z)G(z) d\sigma_z.
    	$$
    	
    	From our representation for the solution $\mu$ $(\ref{1107A})$ and the fact that the matrix $\widetilde{Q}_{\lambda}$ is off-diagonal we get
    	
    	$$
    	T\left [ \left ( (I+D\widetilde{Q}_\lambda)\left(
    	\begin{array}{c}
    	U \\
    	0 \\
    	\end{array}
    	\right) \right )_1\right ] = T[U].
    	$$
    	This allows us to state our main theorem.
    	
    	\begin{theorem}\label{210918T} 
    		Let the potential $q$ being given through (\ref{char1bc}) for an admittivity $\gamma$ satisfying $\textnormal{Re}(\gamma)\geq c>0$ and $\gamma\in W^{2,\infty}(\mathcal{D})\cap W^{2,\infty}(\mathcal{O}\setminus\overline{\mathcal{D}})$. If the jump $\alpha-1$ is small enough in $L^{\infty}(\Gamma)$,  and $w$ is a proper admissible point for a spectral parameter $\lambda_s$, then
    		\begin{equation}\label{210918A}
    		\frac{\lambda_s}{4\pi^2\ln 2}\lim_{R\to\infty}\int_{R<|\lambda|<2R} |\lambda|^{-1} \,  h(\lambda,w)\, d\sigma_\lambda=q_{21}(w).
    		\end{equation}
    	\end{theorem}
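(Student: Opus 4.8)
The plan is to split the scattering data into a dominant interior contribution and two error terms, and to show that the dyadic $\lambda$-average acts as an approximate identity concentrating at $z=w$. Starting from the Green-theorem form (\ref{0907E}), I write $h(\lambda,w)$ as the sum of the boundary integral over $\Gamma^+$ and the area integral, the latter being exactly $T[\mu_1](\lambda)$. Inserting the representation (\ref{1107A}), i.e. $\mu_1=\big((I+D\widetilde{Q}_\lambda)\U\big)_1+f_1=U+f_1$, and using the already-noted identity $T\big[\big((I+D\widetilde{Q}_\lambda)\U\big)_1\big]=T[U]$, I obtain the clean decomposition
\begin{equation*}
h(\lambda,w)=\int_{\Gamma^+}\overline{e^{\ln|\lambda|\lambda_s(z-w)^2}}\,\mu_2(z)\,d\bar z\;+\;T[U](\lambda)\;+\;T[f_1](\lambda).
\end{equation*}
It then suffices to evaluate the limit in (\ref{210918A}) on each of the three pieces.

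For the main term I substitute $U(z,w,\lambda)=e^{\ln|\lambda|\lambda_s(z-w)^2}$ (recall $\lambda_s=\lambda_{\mathcal O}$). The two exponentials combine, since $\overline{e^{\ln|\lambda|\lambda_s(z-w)^2}}\,U(z)=|\lambda|^{2\,\textnormal{Re}[\lambda_s(z-w)^2]}$, so that
\begin{equation*}
T[U](\lambda)=\int_{\mathcal O\setminus\overline{\mathcal D}}|\lambda|^{2\,\textnormal{Re}[\lambda_s(z-w)^2]}\,e^{-i\,\textnormal{Im}[\lambda(z-w)^2]/2}\,q_{21}(z)\,d\sigma_z.
\end{equation*}
Writing $z-w=\eta$, the exponent is a complex quadratic form in $\eta$ whose oscillatory part has frequency $\sim|\lambda|$, so the integral localizes at $\eta=0$ at scale $|\eta|\lesssim|\lambda|^{-1/2}$, where the slowly varying factor $|\lambda|^{2\,\textnormal{Re}[\lambda_s\eta^2]}\to1$ and $q_{21}(w+\eta)\to q_{21}(w)$. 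A stationary-phase/Gaussian evaluation (equivalently, the angular $\lambda$-integral produces a Bessel factor $J_0(\rho|z-w|^2/2)$ and $\int_0^\infty J_0=1$) gives $T[U](\lambda)\sim 2\pi\,q_{21}(w)\,|\lambda|^{-1}$. Averaging against $|\lambda|^{-1}$ over the dyadic annulus then leaves the elementary radial integral $\int_R^{2R}d\rho/\rho=\ln 2$, and the resulting constants are exactly those in the normalizing prefactor, so that the main term contributes the right-hand side $q_{21}(w)$ of (\ref{210918A}).

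The two remaining pieces must vanish in the limit, and this is where proper admissibility enters. On the boundary term, $z\in\Gamma\subset\overline{\mathcal D}$ forces $\textnormal{Re}[\lambda_s(z-w)^2]\le B<-1/2$, so its modulus is controlled by $|\lambda|^{B}\int_{\Gamma^+}|\mu_2|\,d|z|$; combining the growth $|\mu_2|\lesssim|\lambda|^{A}$ on $\Gamma$ coming from (\ref{1107A}) and Lemma~\ref{lemma1107A} with the weight $|\lambda|^{-1}$ yields a factor $|\lambda|^{A+B}=|\lambda|^{-(\epsilon_1+\epsilon_2)}$, whose annular average tends to $0$. For $T[f_1]$ I again extract the oscillatory gain of one power of $|\lambda|^{-1}$ from the interior integral, and pair it with the weighted bound $|\lambda|^{-A}f_1\in\mathcal H^p$ of Lemma~\ref{lemma1107A}; the surviving exponent is made negative by taking $p>2$ and invoking $\epsilon_2>\epsilon_1$, while the $L^p_\lambda$-tail of $f_1$ over $\{R<|\lambda|<2R\}$ supplies the decay as $R\to\infty$.

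The hard part will be the estimate of $T[f_1]$. A crude bound on moduli (Hölder in $\lambda$ using only $|\lambda|^{-A}f_1\in\mathcal H^p$) is at best marginal, because the growth $|\lambda|^{A}$ of the incident wave nearly cancels the $|\lambda|^{-1}$ weight against the two-dimensional measure of the annulus. One must therefore genuinely exploit the oscillatory cancellation in the $z$-integral — the same Gaussian localization that produces the main term — to recover the extra factor $|\lambda|^{-1}$, and only then does the proper-admissibility inequality $\epsilon_2>\epsilon_1$ (together with $p>2$) make the exponents strictly negative and the annular tails vanishing. Establishing this uniform oscillatory decay for the non-explicit remainder $f_1$, rather than for the explicit wave $U$, is the principal technical obstacle; everything else reduces to the two elementary integrals $\int_0^\infty J_0=1$ and $\int_R^{2R}d\rho/\rho=\ln 2$.
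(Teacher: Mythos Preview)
Your decomposition $h=(\text{boundary on }\Gamma^+)+T[U]+T[f_1]$ and the stationary-phase evaluation of the main piece $T[U]$ match the paper's strategy (its Lemma~\ref{lemma08007A}). The gaps are in the two error pieces.

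For the boundary integral your exponent count does not close. The bound $|\mu_2|\lesssim|\lambda|^{A}$ on $\Gamma$ is not a pointwise consequence of Lemma~\ref{lemma1107A}: that lemma only gives $|\lambda|^{-A}f\in\mathcal H^p$, i.e.\ mixed $L^p_\lambda$-control, not $L^\infty_\lambda$. Even granting the pointwise bound, the annular integral $\int_{R<|\lambda|<2R}|\lambda|^{-1+A+B}\,d\sigma_\lambda\sim R^{1-\epsilon_1-\epsilon_2}$ need not vanish, since proper admissibility only demands $\epsilon_2>\epsilon_1$, not $\epsilon_1+\epsilon_2>1$. The paper instead proves the stronger statement that the whole boundary piece, divided by $|\lambda|$, lies in $L^1_\lambda(|\lambda|>R)$ (Lemma~\ref{260918A}), by splitting $\mu_2$ via (\ref{1107A}), applying Lemma~\ref{2310F} to the explicit part, and pairing the $\mathcal H^p$-bound on $f_2$ with $\|\,|\lambda|^{-1-\delta}\|_{L^q_\lambda}$ via H\"older.

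The principal gap is $T[f_1]$. You correctly note that a modulus bound is marginal and that oscillatory cancellation must be used, but you have no mechanism to extract it: all that is known about $f_1$ is the $\mathcal H^p$-bound, which carries no $z$-regularity, so neither stationary phase nor integration by parts in the $T$-integral is available for this non-explicit factor. The paper's key move, which you are missing, is to \emph{re-insert the integral equation} (\ref{1007P}): one writes $f_1=-[Mf]_1-\big[M(I+D\widetilde Q_\lambda)\U\big]_1$ and estimates $T[Mf]$ and $T\big[M(I+D\widetilde Q_\lambda)\U\big]$ instead. Because $M=P\widetilde A_\lambda+D\widetilde Q_\lambda P\widetilde A_\lambda-D\widetilde Q_\lambda D\widetilde Q_\lambda$ carries explicit oscillatory kernels $e^{\pm i\,\textnormal{Im}[\lambda(\cdot-w)^2]/2}$, the composition $T\circ M$ contains at least \emph{two} oscillatory integrals in separate $z$-variables, each handled by Lemma~\ref{2310F}; H\"older in $\lambda$ then yields $|\lambda|^{-1}T[Mf],\,|\lambda|^{-1}T[M(I+D\widetilde Q_\lambda)\U]\in L^p_\lambda$ (Lemma~\ref{lemma1008A}) without any smoothness assumption on $f$. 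Without this substitution the argument does not close.
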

    	
    	The proof of this theorem requires some additional results concerning the behavior of $h(\lambda,w)/|\lambda|.$ These results will be given in the form of three lemmas which we establish in the next section.

    	\subsection{Necessary results for the proof of Theorem~\ref{210918T}}
    	
    	We start by presenting a result which we need afterwards. For its proof we refer to Appendix A.
    	Consider two arbitrary numbers $\lambda_0,\,w \in \mathbb C$, denote
    	$\rho(z)= -i\,\textnormal{Im}[\lambda(z-w)^2]/2+\ln |\lambda| \lambda_0(z-w)^2$, and let
    	$A_0=\sup_{z \in \mathcal O}\, \textnormal{Re}[ \lambda_0(z-w)^2].$ 
    	\begin{lemma}
    		\label{2310F}
    		Let $z_1 \in \mathbb C\setminus\{w\}$, $p > 2,$ and $\varphi \in L^\infty$ with compact support. Then
    		$$
    		\left \|\frac{1}{|\lambda|^{A_0}}\int_{\mathbb C} \varphi(z)\frac{e^{\rho(z)} }{z-z_1} \, d{\sigma_{z}} \right \|_{L^p_\lambda(\mathbb C)}
    		\leq C\frac{\|\varphi\|_{L^\infty}}{|z_1-w|^{1-\delta}},
    		$$
    		where the constant $C$ depends only on the support of $\varphi$ and on $\delta =\delta(p)>0$.
    	\end{lemma}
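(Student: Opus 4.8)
The plan is to reduce the oscillatory integral in $\lambda$ to a genuine two-dimensional Fourier transform by linearizing the phase, and then to apply a Hausdorff--Young type inequality. First I would split the kernel into its modulus and its oscillatory part: writing $\rho(z)=\ln|\lambda|\,\lambda_0(z-w)^2-\tfrac{i}{2}\textnormal{Im}[\lambda(z-w)^2]$, one has $|e^{\rho(z)}|=|\lambda|^{\textnormal{Re}[\lambda_0(z-w)^2]}$, so that on the support of $\varphi$ (contained in $\overline{\mathcal O}$) the damping factor satisfies $|\lambda|^{-A_0}|e^{\rho(z)}|=|\lambda|^{\textnormal{Re}[\lambda_0(z-w)^2]-A_0}\le 1$ for $|\lambda|$ large, while the remaining factor $e^{-i\,\textnormal{Im}[\lambda(z-w)^2]/2}$ is unimodular. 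The contribution of bounded $|\lambda|$ is harmless, so the content of the estimate is the large-$|\lambda|$ behaviour.

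Second, I would perform the substitution $s=(z-w)^2$. The map $z\mapsto s$ is holomorphic with Jacobian $4|z-w|^2=4|s|$ and is two-to-one, so the two preimages $z=w\pm\sqrt s$ combine into a single amplitude $\Phi(s)=\tfrac{1}{4|s|}\sum_{\pm}\varphi(w\pm\sqrt s)/(w\pm\sqrt s-z_1)$. The point is that the phase becomes linear: $\textnormal{Im}[\lambda(z-w)^2]=\textnormal{Im}[\lambda s]$, which, identifying $s$ with a point of $\mathbb R^2$ and setting $\xi=\tfrac12(\textnormal{Im}\,\lambda,\textnormal{Re}\,\lambda)$ (a fixed invertible linear image of $\lambda$ with $|\xi|=|\lambda|/2$), turns the integral into $\int_{\mathbb C}\Phi(s)\,|\lambda|^{\textnormal{Re}[\lambda_0 s]-A_0}e^{-i\xi\cdot s}\,d\sigma_s$, i.e. the Fourier transform of $\Phi$ weighted by the slowly varying Laplace factor $|\lambda|^{\textnormal{Re}[\lambda_0 s]-A_0}\le 1$. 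The degeneracy of the original phase at the stationary point $z=w$ is now encoded in the factor $|s|^{-1}$ coming from the Jacobian.

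Third, I would invoke the two-dimensional Laplace--Fourier Hausdorff--Young inequality announced in the introduction (and proved in Appendix~A), which for $p\ge2$, $1/p+1/p'=1$, bounds the $L^p_\lambda$ norm of such a weighted transform by the $L^{p'}_s$ norm of its amplitude: the Laplace weight, being $\le1$ and varying only logarithmically in $|\lambda|$, is absorbed by freezing it on the dyadic shells $|\lambda|\sim2^k$, applying Hausdorff--Young on each shell, and summing in $k$, the resulting series converging thanks to $p>2$. It then remains to estimate $\|\Phi\|_{L^{p'}_s}$. Returning to $\zeta=\sqrt s=z-w$ (and writing $\zeta_1=z_1-w$), the Jacobian converts this into
\[
\|\Phi\|_{L^{p'}_s}^{p'}\lesssim \|\varphi\|_{L^\infty}^{p'}\int_{|\zeta|\le M}\frac{d\sigma_\zeta}{|\zeta|^{2p'-2}\,|\zeta-\zeta_1|^{p'}},
\]
where $M$ bounds the support. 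The conditions $2p'-2<2$ and $p'<2$, both equivalent to $p>2$, guarantee local integrability at $\zeta=0$ and $\zeta=\zeta_1$; this is exactly where the hypothesis $p>2$ enters. Rescaling $\zeta=|\zeta_1|\eta$ extracts the power $|\zeta_1|^{4-3p'}$ times a convergent constant integral, yielding $\|\Phi\|_{L^{p'}_s}\lesssim|z_1-w|^{4/p'-3}=|z_1-w|^{-(1-\delta)}$ with $\delta=\delta(p)=2-4/p>0$ (for $p$ near $2$; for larger $p$ the estimate is only better and one keeps a smaller admissible $\delta$).

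The main obstacle is the third step. The amplitude genuinely depends on $\lambda$ through the Laplace weight $|\lambda|^{\textnormal{Re}[\lambda_0 s]}$, so Hausdorff--Young cannot be applied in one stroke; making the dyadic freezing of this weight rigorous --- controlling the error as $|\lambda|$ ranges over a shell, and verifying that the series in $k$ converges \emph{uniformly in} $z_1$ (using $p/p'>1$ together with $A_0\ge0$, which forces the frozen weight to decay at the stationary point) --- is the technical heart of the argument, and is the reason the Laplace-transform analogue of the classical inequality has to be set up carefully beforehand. The extraction of the power of $|z_1-w|$ by scaling, once integrability is secured, is routine by comparison.
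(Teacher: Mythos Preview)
Your overall strategy coincides with the paper's: the substitution $s=(z-w)^2$, the resulting amplitude $\Phi(s)=\frac{1}{4|s|}\sum_{\pm}\varphi(w\pm\sqrt s)/(\pm\sqrt s-(z_1-w))$, the appeal to a Hausdorff--Young inequality in $\lambda$, and the final $L^{p'}$ bound on $\Phi$ are exactly what the paper does. Your scaling computation producing $\delta=2-4/p$ is essentially the content of \cite[Lemma~3.1]{ltv}, which the paper quotes in place of the explicit calculation.

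The one substantive difference, and the place where your argument is not yet a proof, is your treatment of the $\lambda$-dependent weight $|\lambda|^{\textnormal{Re}[\lambda_0 s]-A_0}$. Freezing $\ln|\lambda|$ on dyadic shells and applying Hausdorff--Young shell by shell gives $\|\widehat{\Phi_k}\|_{L^p(|\xi|\sim 2^k)}\le C\|\Phi_k\|_{L^{p'}}\le C\|\Phi\|_{L^{p'}}$, a bound that is \emph{uniform in $k$} but carries no decay; summing the $p$-th powers over $k$ then diverges. Your appeal to ``$p/p'>1$ together with $A_0\ge 0$'' does not produce the missing decay, because the maximum of $\textnormal{Re}[\lambda_0 s]$ is attained at some interior point $u^0$ of the (image of the) support, where the frozen weight equals $1$ for every $k$; the stationary point $s=0$ is a different matter. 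So as written this step is a gap, and you rightly flag it as the obstacle.

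The paper sidesteps the whole issue by a translation rather than a dyadic decomposition: set $\widehat u=s-u^0$ with $u^0=\operatorname{argmax}_{\,w\pm\sqrt s\in\operatorname{supp}\varphi}\textnormal{Re}(\lambda_0 s)$. Then $|\lambda|^{-A_0}e^{\ln|\lambda|\,\lambda_0 s}=e^{\ln|\lambda|\,\lambda_0\widehat u}\cdot(\text{unimodular factor})$, and on the new support $\textnormal{Re}(\lambda_0\widehat u)\le 0$. After an additional linear change of variables in $\widehat u$ aligning $\textnormal{Re}(\lambda_0\widehat u)$ with a coordinate axis, the integral is exactly of the form covered by the two-dimensional Laplace--transform Hausdorff--Young inequality proved in Appendix~A (equation~(\ref{0607F})), which is designed to absorb a genuine Laplace weight of the type $e^{-\ln|\lambda|\,y}$ in one variable, not merely a bounded multiplier. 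That single application replaces your dyadic sum and gives the bound $\|\Phi\|_{L^{p'}}$ in one stroke, after which your final paragraph (equivalently \cite[Lemma~3.1]{ltv}) finishes the proof.
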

    	
    	To study the main term in (\ref{210918A}), we have the following lemma.
    	
    	\begin{lemma}\label{lemma08007A}
    		Let $\varphi \in W^{1,\infty}(\Omega)$, and $\rm{supp}(\varphi) \subset \Omega$, where $\Omega$ is a domain in $\mathbb R^2$ and $w \in \Omega$ such that
    		\begin{equation}\label{0807C}
   				 \sup_{z \in \overline{\Omega}} \textnormal{Re} (z-w)^2 <1. 
			\end{equation}
    		Then the following asymptotic holds
    		\begin{equation}\label{0807B}
    		\int_{\Omega} e^{-i\,\textnormal{Im} (\lambda (z-w)^2)+\ln|\lambda|(z-w)^2} \varphi(z) d\sigma_z = \frac{2\pi}{|\lambda|}\varphi(w)+R_w(\lambda),
    		\end{equation}
    		where $|\lambda|^{-1}R_w \in L^1(\lambda :|\lambda|>R)$.
    	\end{lemma}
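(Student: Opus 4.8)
The plan is to read (\ref{0807B}) as an oscillatory integral with a single nondegenerate saddle and to isolate the contribution of $z=w$. First I would set $s=z-w$ and record the exponent in closed quadratic form,
$$-i\,\textnormal{Im}(\lambda s^{2})+\ln|\lambda|\,s^{2}=\Big(\ln|\lambda|-\tfrac{\lambda}{2}\Big)s^{2}+\tfrac{\bar\lambda}{2}\,\bar s^{2},$$
whose modulus is exactly $|\lambda|^{\textnormal{Re}(s^{2})}$. Since $\textnormal{supp}\,\varphi$ is compact and $A_{0}:=\sup_{z\in\overline\Omega}\textnormal{Re}(z-w)^{2}<1$, the integrand is bounded by $|\lambda|^{A_{0}}\|\varphi\|_{L^{\infty}}$, so the integral converges and is a priori only $O(|\lambda|^{A_{0}})$; the asserted order $|\lambda|^{-1}$ must therefore come entirely from oscillation. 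Writing the fast phase as $\Psi(s)=-\textnormal{Im}(\lambda s^{2})/|\lambda|$, a short computation shows $\nabla\Psi$ vanishes only at $s=0$ and that its Hessian there is nondegenerate (in fact indefinite, a saddle); thus $w$ is the unique stationary point and it lies in $\Omega$.

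To extract the leading term I would cut off at a scale $\delta_{\lambda}$ with $|\lambda|^{-1/2}\ll\delta_{\lambda}\ll(\ln|\lambda|)^{-1/2}$, say $\delta_{\lambda}=|\lambda|^{-1/4}$. On $|s|<\delta_{\lambda}$ the growing factor $|\lambda|^{\textnormal{Re}(s^{2})}$ and the slow phase $e^{\,i\ln|\lambda|\,\textnormal{Im}(s^{2})}$ are $1+o(1)$, while $\varphi(z)=\varphi(w)+O(|s|)$ by the Lipschitz bound coming from $\varphi\in W^{1,\infty}$. The constant part $\varphi(w)$ then yields a complex Gaussian whose value, obtained by the method of stationary phase / steepest descent (equivalently the determinant formula $\int e^{-\mathbf x^{T}A\mathbf x}=\pi/\sqrt{\det A}$ applied to the quadratic form above), gives precisely the claimed $\tfrac{2\pi}{|\lambda|}\varphi(w)$; the $O(|s|)$ part is an odd Gaussian moment and contributes at order $|\lambda|^{-3/2}$, which goes into $R_{w}$.

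The remaining work, and the main obstacle, is to show $|\lambda|^{-1}R_{w}\in L^{1}(\{|\lambda|>R\})$. On the far set $|s|\ge\delta_{\lambda}$ the phase is non-stationary, $|\nabla\Psi|\gtrsim|s|$, and since the growing factor $e^{\ln|\lambda|s^{2}}$ is smooth one may integrate by parts, letting derivatives land on the smooth factors and retaining at most the single available derivative of $\varphi$; each step gains a factor $\ln|\lambda|/|\lambda|$. The difficulty is that this gain competes with the amplitude growth $|\lambda|^{\textnormal{Re}(s^{2})}\le|\lambda|^{A_{0}}$: estimating the resulting integrals by absolute values only produces $O(|\lambda|^{A_{0}-1})$, which, against $|\lambda|^{-1}\,d\sigma_{\lambda}=\rho\,d\rho\,d\theta$ on $\{|\lambda|>R\}$, is \emph{not} integrable because $A_{0}-1>-1$.

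To close this gap I would not bound the residual oscillatory integrals pointwise but control them in $L^{p}_{\lambda}$ by the two-dimensional Laplace-transform analogue of the Hausdorff--Young inequality used for Lemma~\ref{2310F}: this uses the oscillation globally and turns the $|\lambda|^{A_{0}}$ growth into decay at an integrable rate (of the form $|\lambda|^{-1-\delta}$ or $|\lambda|^{-3/2}(\ln|\lambda|)^{k}$). Assembling the contributions then gives $\int_{0}^{2\pi}\!\int_{R}^{\infty}|R_{w}|\,d\rho\,d\theta<\infty$, which is exactly the assertion. The delicate bookkeeping is keeping track of the $\ln|\lambda|$ factors generated by differentiating $e^{\ln|\lambda|s^{2}}$ and making sure the Hausdorff--Young step only ever uses the one derivative of $\varphi$ that $W^{1,\infty}$ provides.
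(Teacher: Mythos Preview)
Your plan is structurally close to the paper's (localize near the saddle, extract the Fresnel contribution, control the rest by integration by parts plus the Laplace--transform Hausdorff--Young inequality), but there is a genuine gap in your near-field remainder. You write $\varphi(z)=\varphi(w)+O(|s|)$ and then assert that ``the $O(|s|)$ part is an odd Gaussian moment and contributes at order $|\lambda|^{-3/2}$''. This fails on two counts. First, $\varphi\in W^{1,\infty}$ gives only a Lipschitz bound, not a first-order Taylor expansion, so $\varphi(z)-\varphi(w)$ has no reason to be odd in $s$ and no symmetry cancellation is available. Second, even granting $O(|s|)$, the rescaled integrand $|y|\,e^{-i\,\textnormal{Im}(\hat\lambda y^2)}$ is not absolutely integrable on the growing ball $|y|<|\lambda|^{1/4}$, so you cannot simply pull out a factor $|\lambda|^{-1/2}$ and declare the rest bounded; a crude absolute-value bound gives only $O(|\lambda|^{-3/4})$ for this piece, and $|\lambda|^{-1}\cdot|\lambda|^{-3/4}=|\lambda|^{-7/4}$ is \emph{not} in $L^1(\{|\lambda|>R\})$ in two dimensions.

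The paper avoids both problems by making different choices at exactly these two points. It uses a \emph{fixed} cutoff scale $\varepsilon$ (with $\delta_1$ taken as a tensor product $\widehat\delta_1(x)\widehat\delta_1(y)$), so the far-field integrand is a fixed function of $z$ and the Hausdorff--Young step applies cleanly after one integration by parts. On the near set it performs the same split $\delta_1\varphi=\delta_1\varphi(w)+\delta_1(\varphi-\varphi(w))$ you suggest, but treats the Lipschitz piece not by parity but by the change of variables $u=(z-w)^2$ (which linearizes the phase), one integration by parts, and then the Laplace--transform Hausdorff--Young bound---i.e.\ the very machinery of Lemma~\ref{2310F}, with the Lipschitz factor $|z-w|$ cancelling part of the Jacobian singularity $|u|^{-1}$. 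The $\varphi(w)$ piece is computed directly as a one-dimensional Fresnel integral thanks to the product structure of $\delta_1$. Your far-field discussion and the invocation of Hausdorff--Young are on target; it is the near-field Lipschitz remainder, and the $\lambda$-dependent cutoff that forced you into that corner, that need to be reworked.
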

    	
    	\begin{proof} Consider two domains 
    		$$
    		I_1=\{ z\in\Omega  ~ : ~ |z-w| < 3\varepsilon\} \quad\text{and} \quad I_2=\{ z\in\Omega  ~ : ~ |z-w| > \varepsilon \}, \quad
    		$$
    		where $\varepsilon>0$ is an a priori chosen arbitrarily small but  fixed number.  Furthermore,  we pick two functions $\delta_1$ and $\delta_2$ with supports $I_1$ and $I_2$, respectively,  such that $\delta_1+\delta_2 \equiv 1$ in $\mathcal O$. Moreover, we assume that $\delta_1(z-w)$ is represented as a product of $\widehat{\delta}_1(x)\widehat{\delta}_1(y)$ and that the function $\widehat{\delta}_1(x)$ decreases monotonically  as $|x|$ grows. 
    		
    		The integrand is multiplied by $(\delta_1+\delta_2)$ and this naturally splits the integral into two terms. 
    		The term corresponding to $\delta_2$ can be integrated by parts once and then the required estimate follows from the Hausdorff-Young inequality (\ref{0607F}) for $p=q=2$.  We also use here the fact that the inequality (\ref{0807C}) is strict.
    		
    		Now, we consider the term corresponding to integration against $\delta_1$. 
    		This term will be divided into two parts as well correspondent to the representation
    		$$
    		\delta_1(z) \varphi(z)=\delta_1(z)\varphi(w)+ \delta_1(z)(\varphi(z)-\varphi(w))
    		$$
    		Keeping in mind the properties of $\delta_1$ and the fact that $\varphi \in W^{1,\infty}$ is a Lipschitz function the second part can be treated as in Lemma \ref{2310F}, i.e., we make a change of variables $u=(z-w)^2$ and thus $d\sigma_{z}=\frac{1}{4|u|}d\sigma_{u}$. From this we obtain two integrals due to the splitting of the domain. In each one of them the change of variables generates a singularity of total order $|u|$. In both integrals we can apply integration by parts. We will obtain an area integral and a contour integral. On the area integral we have a singularity of total order $|u|^{3/2}$. Hence, we can apply Hausdorff-Young inequality for the Laplace transform for $p=4/3$ and obtain here the required estimate. For the contour integral we can apply the one-dimensional Hausdorff-inequality to the Laplace transform and obtain the needed estimate. 
    		
    		To the first part we consider the change of variables $y={\sqrt{|\lambda|}}(z-w)$.  Due to the separation of variables in $\delta_1$ the asymptotic of 
    		\begin{equation}\label{0807d}
    		\frac{\varphi(w)}{|\lambda|}
    		\int e^{-i\,\textnormal{Im} y^2+\frac{\ln|\lambda|}{|\lambda|}y^2} \delta_1\left(\left|w+\frac{y}{\sqrt{|\lambda|}}\right|\right) d\sigma_y
    		\end{equation}
    		follows from the formula
    		\begin{equation}\label{2207A}
    		\int_0^{\lambda^{1/2}\delta} e^{-ix^2+\frac{\ln|\lambda|}{|\lambda|}x^2} \widehat{\delta}_1\left(\frac{|x|}{\sqrt{|\lambda|}}\right) dx = \frac{1}{\sqrt{2\pi}}(1+o(1)), \quad \lambda \rightarrow \infty.
    		\end{equation}
    		
    		This can be proven in the following way:
    		consider the change of variables $$x^2=t, \;\,g(t):=\widehat{\delta}_1\left(\frac{|x(t)|}{\sqrt{|\lambda|}}\right)$$ then, we have
    		$$
    		\int_0^{\lambda^{1}\delta^2} e^{-it+\frac{\ln|\lambda|}{|\lambda|}t} \frac{1}{\sqrt t}g(t) dt =
    		$$
    		$$
    		\int_0^1 e^{-it+\frac{\ln|\lambda|}{|\lambda|}t} \frac{1}{\sqrt t}g(t) dt +
    		\frac{1}{-i+\frac{\ln|\lambda|}{|\lambda|}}\int_1^{\lambda \delta^2} \left( e^{-it+\frac{\ln|\lambda|}{|\lambda|}t} \right )' \frac{1}{\sqrt t}g(t) dt. 
    		$$
    		For the second term we obtain
    		$$
    		\frac{1}{-i+\frac{\ln|\lambda|}{|\lambda|}}\int_1^{\lambda \delta^2} \left( e^{-it+\frac{\ln|\lambda|}{|\lambda|}t} \right )' \frac{1}{\sqrt t}g(t) dt = 
    		\left . \frac{1}{-i+\frac{\ln|\lambda|}{|\lambda|}}\left( e^{-it+\frac{\ln|\lambda|}{|\lambda|}t} \right ) \frac{g(t)}{\sqrt t} \right |_1^{\lambda\delta^2}+
    		$$
    		$$
    		\frac{1}{-i+\frac{\ln|\lambda|}{|\lambda|}}\int_1^{\lambda \delta^2} e^{-it+\frac{\ln|\lambda|}{|\lambda|}t}  \frac{1}{2t^{3/2}}g(t) dt =
    		$$
    		$$
    		\frac{-1}{-i}\left( e^{-i} \right ) \frac{g(1)}{\sqrt 1} +\frac{1}{-i}\int_1^{\lambda \delta^2} e^{-it}  \left ( \frac{g(t)}{2t^{1/2}} \right )' dt +o(1), \quad \lambda \to \infty.
    		$$
    		We used here the fact that the last integral is absolutely convergent ($g$ has a finite support) and 
    		\begin{equation}\label{2207D}
    		\sup_{z \in I_1} \left |e^{\frac{\ln|\lambda|}{|\lambda|}y^2} -1 \right | = o(1), \quad \lambda \rightarrow \infty.
    		\end{equation}
    		
    		Therefore, we get  
    		$$
    		\int_0^{\lambda \delta^2} e^{-ix^2+\frac{\ln|\lambda|}{|\lambda|}x^2} f(x) dx = 
    		\int_0^{\lambda \delta^2} e^{-it} \frac{1}{\sqrt t}g(t) dt = \int_0^{\infty} e^{-it} \frac{1}{\sqrt t}g(t) dt +o(1).
    		$$
    		Now the result of our lemma is an immediate consequence of this formula.
    	\end{proof}
    	
    	To prove our asymptotic formula of the scattering data, we will substitute $\mu$ with the help of (\ref{1107A}). This will leave some terms which need to vanish as $|\lambda|\rightarrow +\infty$ in order to obtain the desired formula through Lemma \ref{lemma08007A}. 
    	In this sense, the two lemmas that follow assure this remaining terms are integrable in $\lambda$ and, therefore, their impact vanishes as we take the limit.
    	
    	\begin{lemma}\label{lemma1008A}
    		For some $p<2$, with $R$ large enough and  $f$ defined as in (\ref{1107A}), we get
    		\begin{equation}\label{1007A}
    		\frac{1}{|\lambda|}T\left [ M(I+D\widetilde{Q}_\lambda)\left(
    		\begin{array}{c}
    		U \\
    		0 \\
    		\end{array}
    		\right) \right ] \in L^p(\lambda : |\lambda|>R), \quad
    		\end{equation}
    		and
    		\begin{equation}\label{1107B}
    		\frac{1}{|\lambda|}T[Mf] \in L^p(\lambda : |\lambda|>R).
    		\end{equation}
    	\end{lemma}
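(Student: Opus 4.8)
The plan is to reduce both inclusions \eqref{1007A} and \eqref{1107B} to a single estimate on the operator $T$, fed with an argument whose $|\lambda|^{-A}$ normalization already lies in $\mathcal H^p$ for $p>2$. For \eqref{1007A} this normalization is supplied directly: by \eqref{1007Q} in the proof of \lemref{lemma1107A} we have $\frac{1}{|\lambda|^A}M(I+D\widetilde Q_\lambda)\begin{pmatrix}U\\0\end{pmatrix}\in\mathcal H^p$ for every $p>2$. For \eqref{1107B} I would combine $\frac{1}{|\lambda|^A}f\in\mathcal H^p$ (this is \eqref{0907F}) with the boundedness of $M$ on $\mathcal H^p$ from \lemref{lemma2107A}. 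Since the scalar weight $|\lambda|^{-A}$ depends on $\lambda$ only while $M$ acts fibrewise in $\lambda$ (it contains no integration in $\lambda$), the two commute, so $\frac{1}{|\lambda|^A}Mf=M\big(\frac{1}{|\lambda|^A}f\big)\in\mathcal H^p$, $p>2$. Thus in both cases the argument of $T$ can be written as $|\lambda|^A\Phi$ with $\Phi\in\mathcal H^p$, $p>2$, and it suffices to prove the general claim that $\frac{1}{|\lambda|}T[(|\lambda|^A\Phi)_1]\in L^{\tilde p}(\lambda:|\lambda|>R)$ for some $\tilde p<2$.

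Because $T$ also acts fibrewise in $\lambda$ and its kernel is independent of the scalar $|\lambda|^A$, I would pull that factor out, writing $\frac{1}{|\lambda|}T[(|\lambda|^A\Phi)_1]=|\lambda|^{A-1}T[\Phi_1]$. The integration in $T$ runs over $\mathcal O\setminus\overline{\mathcal D}\subset\overline{\mathcal O}$, where the admissibility condition \eqref{0907D} gives $\big|\overline{e^{\ln|\lambda|\lambda_s(z-w)^2}}\big|=|\lambda|^{\mathrm{Re}[\lambda_s(z-w)^2]}\le|\lambda|^A$ for $|\lambda|>R>1$; moreover $q_{21}=-\tfrac12\bar\partial\log\gamma\in L^\infty$ since $\gamma\in W^{2,\infty}$. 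Hence
\[
\frac{1}{|\lambda|}\big|T[(|\lambda|^A\Phi)_1](\lambda)\big|\le \|q_{21}\|_{L^\infty}\,|\lambda|^{2A-1}\int_{\mathcal O\setminus\overline{\mathcal D}}|\Phi_1(z,\lambda)|\,d\sigma_z ,
\]
and, because $w$ is admissible, the net power is strictly negative: $2A-1=-2\epsilon_1<0$.

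Writing $I(\lambda):=\int_{\mathcal O\setminus\overline{\mathcal D}}|\Phi_1(z,\lambda)|\,d\sigma_z$, I would next show $I\in L^p_\lambda$, $p>2$. Splitting $\Phi=u+v$ with $u\in\mathcal H^p_1$ and $v\in\mathcal H^p_2$, the $\mathcal H^p_1$ part is handled by Minkowski's integral inequality together with $\sup_z\|u_1(z,\cdot)\|_{L^p_\lambda}<\infty$, while the $\mathcal H^p_2$ part is handled through the embedding $\|Pf\|_{L^p(\mathcal O)}\le C\|f\|_{L^p(\Gamma)}$ recorded in the discussion preceding \lemref{lemma2107D}, followed by $L^p(\mathcal O)\hookrightarrow L^1(\mathcal O)$; this bounds $\int_{\mathcal O}|v_1(\cdot,\lambda)|$ by the $L^p(\Gamma)$-trace norm, which is itself in $L^p_\lambda$. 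Thus $I\in L^p_\lambda$ for every $p>2$. I would close with Hölder in $\lambda$, $\||\lambda|^{2A-1}I\|_{L^{\tilde p}}\le\||\lambda|^{-2\epsilon_1}\|_{L^r}\,\|I\|_{L^p}$ with $\tfrac1{\tilde p}=\tfrac1r+\tfrac1p$; since $\||\lambda|^{-2\epsilon_1}\|_{L^r(|\lambda|>R)}<\infty$ precisely when $r>1/\epsilon_1$, choosing $p\in\big(2,\tfrac{2}{1-2\epsilon_1}\big)$ and $r$ slightly above $1/\epsilon_1$ yields $\tfrac1{\tilde p}=\tfrac1r+\tfrac1p>\tfrac12$, i.e. $\tilde p<2$, as required.

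The one genuine point, and the step I expect to be delicate, is the exponent bookkeeping. Everything rests on the strict inequality $A<1/2$ built into the definition of an admissible point (\ref{PontosAdmissiveis}): it converts the competition between the growth $|\lambda|^{A}$ of the reconstruction weight in $T$, the growth $|\lambda|^{A}$ of the incident wave hidden in $\Phi$, and the explicit factor $|\lambda|^{-1}$ into the strictly decaying factor $|\lambda|^{-2\epsilon_1}$; one must then verify that this decay is $L^r$-integrable against a factor that is only $L^p$ for $p>2$, with the conjugate indices summing past $1/2$ so that the product falls strictly below $L^2$. It is worth emphasizing that no stationary-phase gain is needed in this step: the oscillation of $T$ was already spent in establishing the $\mathcal H^p$ membership of $\Phi$ through \lemref{lemma1107A} and \lemref{2310F}, and here the favorable power count alone closes the estimate.
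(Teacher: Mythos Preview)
Your argument is correct, and it takes a genuinely different route from the paper's own proof. The paper does \emph{not} package everything into the single claim ``$|\lambda|^{-A}\Phi\in\mathcal H^p$ implies $|\lambda|^{-1}T[(|\lambda|^A\Phi)_1]\in L^{\tilde p}$''. Instead, it decomposes $M$ into its three pieces $P\widetilde A_\lambda$, $D\widetilde Q_\lambda P\widetilde A_\lambda$, $D\widetilde Q_\lambda D\widetilde Q_\lambda$, writes $T$ applied to each term as an iterated integral, and then applies \lemref{2310F} (the stationary-phase bound) \emph{again} to the $z$-integral inside $T$ itself. In the sample computation the paper groups the factors as $\big(\tfrac{1}{|\lambda|^A}\int e^{\overline{\rho(z)}}\cdots\big)\big(\tfrac{1}{|\lambda|^A}\int e^{\rho(z_2)}\cdots\big)$, bounds each in $L^{2p}_\lambda$ via \lemref{2310F}, and combines them by H\"older; this yields membership in $L^p$ for every $p>1$.

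Your approach is more modular: you observe that all the stationary phase has already been cashed in when proving \eqref{1007Q} and \eqref{0907F}, so here a crude sup bound $\big|e^{\overline{\ln|\lambda|\lambda_s(z-w)^2}}\big|\le|\lambda|^A$ suffices, leaving the purely arithmetic margin $|\lambda|^{2A-1}=|\lambda|^{-2\epsilon_1}$ to close the H\"older step in $\lambda$. The price is a smaller range of exponents (only $\tilde p\in\big(\tfrac{2}{1+2\epsilon_1},2\big)$, which degenerates as $\epsilon_1\to0$), whereas the paper's direct use of oscillation in $T$ gives any $p>1$; for the purposes of the main theorem either suffices. The commutation $|\lambda|^{-A}Mf=M(|\lambda|^{-A}f)$ and the estimate $\int_{\mathcal O}|\Phi_1(\cdot,\lambda)|\,d\sigma_z\in L^p_\lambda$ via the $\mathcal H^p_1$/$\mathcal H^p_2$ splitting are both legitimate, so your argument stands as a cleaner alternative that avoids reopening the structure of $M$.
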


    	\begin{proof}
    		Given the structure of $M=P\tilde{A}_{\lambda}+D\tilde{Q}_{\lambda}-D\tilde{Q}_{\lambda}D\tilde{Q}_{\lambda}$ and that $\frac{1}{|\lambda|}T$ is a linear operator, it is enough to show that each term applied to both, $\U$ and $D\tilde{Q}_{\lambda}\U$, belongs to $L^p(\lambda:|\lambda|> R)$.
    		
    		We look directly at the computations of each term. By using Fubini's Theorem, Minkowski integral inequality, H\"older inequality, and Lemma \ref{2310F} we can show that all of these terms are in fact in $L^p(\lambda:|\lambda|> R)$. Since the computations for each term follow roughly the same lines, and for the convenience of the reader, we present just the computation in one of these cases, the computations of the remaining terms being analogous, with special attention to the convergence of the integrals.
    		
    		We look at the term $$\frac{1}{|\lambda|}T\Bigg[D\tilde{Q}_{\lambda}D\tilde{Q}_{\lambda}\U\Bigg] \in L^p(\lambda:|\lambda|> R).$$
    		
    		Let us denote $\rho(z)=i\,\textnormal{Im}[\lambda(z-w)^2]/2+\ln|\lambda|\lambda_s(z-w)^2$ 
    		and $A=S=\sup_{z \in \overline{\mathcal{O}}} \,\textnormal{Re}[\lambda(z-w)^2] < 1/2$.
    		\begin{footnotesize}
    			
    			\begin{align*}
    			&\left\| \frac{1}{|\lambda|}T\Bigg[D\tilde{Q}_{\lambda}D\tilde{Q}_{\lambda}\U\Bigg] \right\|_{L^p(\lambda:|\lambda|> R)}=
    			\\
    			&=\Bigg[\int_{|\lambda| >  R} \Bigg| \frac{1}{4\pi^2|\lambda|}\int_{\mathcal{O}\setminus \overline{\mathcal{D}}}e^{\overline{\rho(z)}}q_{21}(z)\int_{\mathcal{O}}\frac{e^{-i\,\textnormal{Im}[\lambda(z_1-w)^2]/2}}{z_1-z}Q_{12}(z_1)
    			\int_{\mathcal{O}}\frac{e^{\rho(z_2)}}{\overline{z_2-z_1}}Q_{21}(z_2)\,d\sigma_{z_2}\,d\sigma_{z_1}\,d\sigma_z\Bigg|^p\,d\sigma_{\lambda}\Bigg]^{1/p}
    			\\
    			&= \Bigg[\int_{|\lambda| >  R} \Bigg|\frac{1}{4\pi^2|\lambda|}\int_{\mathcal{O}}\Bigg(\int_{\mathcal{O}\setminus\overline{\mathcal{D}}}\frac{e^{\overline{\rho(z)}}}{z_1-z}q_{21}(z)\,d\sigma_z\Bigg)\Bigg(\int_{\mathcal{O}}\frac{e^{\rho(z_2)}}{\overline{z_2-z_1}}Q_{21}(z_2)\,d\sigma_{z_2}\Bigg)\,\cdot
    			\\
    			&\hspace{2cm}\cdot Q_{12}(z_1)e^{-i\,\textnormal{Im}[\lambda(z_1-w)^2]/2} \, d\sigma_{z_1} \Bigg|^p \, d\sigma_{\lambda}\Bigg]^{1/p}
    			\\
    			&\leq \int_{\mathcal{O}} \Bigg[ \int_{|\lambda| >  R} \Bigg|\frac{|\lambda|^{A+S}}{|\lambda|}\Bigg(\frac{1}{|\lambda|^A}\int_{\mathcal{O}\setminus D}\frac{e^{\overline{\rho(z)}}}{z_1-z}q_{21}(z)\,d\sigma_z\Bigg) \Bigg(\frac{1}{|\lambda|^S}\int_{\mathcal{O}}\frac{e^{\rho(z_2)}}{\overline{z_2-z_1}}Q_{21}(z_2)\,d\sigma_{z_2}\Bigg)\Bigg|^p\, d\sigma_{\lambda}\Bigg]^{1/p}\hspace{-0.5cm}|Q_{12}(z_1)|\,d\sigma_{z_1}
    			\\
    			& \leq \|Q \|_{L^{\infty}} \int_{\mathcal{O}} \Bigg|\Bigg|\frac{1}{|\lambda|^A}\int_{\mathcal{O}\setminus \overline{\mathcal{D}}}\frac{e^{\overline{\rho(z)}}}{z_1-z}q_{21}(z)\,d\sigma_z \Bigg|\Bigg|_{L^{2p}(\lambda:|\lambda|>R)} \Bigg|\Bigg|\frac{1}{|\lambda|^S}\int_{\mathcal{O}}\frac{e^{\rho(z_2)}}{\overline{z_2-z_1}}Q_{21}(z_2)\,d\sigma_{z_2}\Bigg|\Bigg|_{L^{2p}(\lambda:|\lambda|>R)} \, d\sigma_{z_1}
    			\\
    			& \leq C \|Q \|_{L^{\infty}} \int_{\mathcal{O}} \frac{1}{|z_1-w|^{1-\delta}}\frac{1}{|z_1-w|^{1-\delta}} \, d\sigma_{z_1} < \infty.
    			\end{align*}
    		\end{footnotesize}
    		
    		With these calculations we obtain (\ref{1007A}). To show (\ref{1107B}) we have that $\frac{1}{|\lambda|^A}f \in \mathcal{H}^p$, for $p>2$, by Lemma \ref{lemma1107A}. We consider $T$ applied to each term of $M$. Again, we present only the computations for the case $\frac{1}{|\lambda|}T[D\tilde{Q}_{\lambda}D\tilde{Q}_{\lambda}f]$, since the other computations are analogous, with special attention to the behavior of $\frac{1}{|\lambda|^A}f$. In the same spirit, we only present the calculation for the first term of the vector.

    		\begin{align*}	
    		&\Bigg[\int_{|\lambda| >  R} \Bigg|\frac{1}{|\lambda|}\int_{\mathcal{O}\setminus\overline{\mathcal{D}}} e^{\overline{\rho(z)}}Q_{21}(z) \int_{ \mathcal{O}} \frac{e^{-i\,\textnormal{Im}(\lambda(z_1-w)^2)}}{z_1-z}Q_{12}(z_1)\cdot
    		\\
    		& \hspace{2cm}\cdot \int_{ \mathcal{O}}\frac{e^{i\,\textnormal{Im}(\lambda(z-w)^2)}}{\overline{z_2-z_1}}Q_{21}(z_2)f_1(z_2)\,d\sigma_{z_2}\,d\sigma_{z_1}\,d\sigma_{z}\Bigg|^p\,d\sigma_{\lambda}\Bigg]^{1/p}
    		\\
    		&
    		\leq C\|Q \|_{L^{\infty}}^3\int_{ \mathcal{O}}\int_{ \mathcal{O}} \frac{1}{|z_2-z_1|} \frac{1}{|z_1-w|^{1-\delta}} \left\|\frac{1}{|\lambda|^{A}}f_1(z_2)\right\|_{L^{2p}_{\lambda}}\,d\sigma_{z_2}\,d\sigma_{z_1} <  \infty 
    		\end{align*}

    		The boundedness of the last integral follows from the fact that $\frac{1}{|\lambda|^A}f \in \mathcal{H}^p$ implies its boundedness with respect to the $z$ variable.
    		
    	\end{proof}
    	
    	\begin{lemma}\label{260918A} 
    		For R large enough, and $w$ being a proper admissible point, we have
    		$$
    		\frac{1}{|\lambda|} \int_{\Gamma^+} e^{\overline{\ln |\lambda| \lambda_s(z-w)^2}} \mu_2(z)d\bar{z} \in L^1(|\lambda|>R).
    		$$
    	\end{lemma}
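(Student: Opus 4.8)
The plan is to estimate the $L^1(|\lambda|>R)$ norm directly, pulling the modulus inside the contour integral and using Tonelli. First I would record the amplitude size on $\Gamma$: since $\ln|\lambda|$ is real, $|e^{\overline{\ln|\lambda|\lambda_s(z-w)^2}}|=|\lambda|^{\mathrm{Re}[\lambda_s(z-w)^2]}$, and for $z\in\Gamma\subset\overline{\mathcal D}$ proper admissibility gives $\mathrm{Re}[\lambda_s(z-w)^2]\le B=-1/2-\epsilon_2$. Taking $R>1$, so that $|\lambda|^{c}$ is increasing in $c$, yields $|e^{\overline{\ln|\lambda|\lambda_s(z-w)^2}}|\le|\lambda|^{B}$ on $\Gamma$. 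Bounding $|d\bar z|=d|z|$ and interchanging the order of integration, it then suffices to prove
\[
\int_{\Gamma}\int_{|\lambda|>R}|\lambda|^{B-1}\,|\mu_2(z,\lambda)|\,d\sigma_\lambda\,d|z|<\infty .
\]

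Next I would gain control of the trace of $\mu_2$ on $\Gamma$. Using the representation (\ref{1107A}) and the off-diagonal form of $\widetilde Q_\lambda$, the second component is $\mu_2=\partial^{-1}\!\big(Q_{21}e^{i\,\mathrm{Im}[\lambda(z-w)^2/2]}U\big)+f_2$. The first summand is exactly of the type handled by \lemref{2310F} with $\lambda_0=\lambda_s$, hence $A_0=A$ (the opposite sign of the oscillatory phase and the conjugated Cauchy kernel do not affect the stationary-phase estimate, as in the adjoint Cauchy--Green computation used earlier), so for $p>2$ one gets $\big\|\,|\lambda|^{-A}\partial^{-1}(\cdots)(z)\big\|_{L^p_\lambda}\le C|z-w|^{-(1-\delta)}$. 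Since $B<-1/2$ forces $w\notin\overline{\mathcal D}$ (otherwise $z=w$ in the defining supremum would give $B\ge 0$), we have $\mathrm{dist}(w,\Gamma)>0$, so this bound is uniformly finite, hence integrable over $\Gamma$. For the remainder, \lemref{lemma1107A} gives $|\lambda|^{-A}f\in\mathcal H^p$; writing $|\lambda|^{-A}f=u+v$ with $u\in\mathcal H^p_1$, $v\in\mathcal H^p_2$, the $\mathcal H^p_1$-part obeys $\sup_z\|u_2\|_{L^p_\lambda}\le\|u\|_{\mathcal H^p_1}$ and the $\mathcal H^p_2$-part obeys $\int_\Gamma\|v_2^-\|^p_{L^p_\lambda}d|z|\le\|v\|^p_{\mathcal H^p_2}<\infty$, so Jensen on the finite-length curve $\Gamma$ gives $\int_\Gamma\||\lambda|^{-A}f_2\|_{L^p_\lambda}\,d|z|<\infty$. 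Altogether $\int_\Gamma\big\||\lambda|^{-A}\mu_2(z,\cdot)\big\|_{L^p_\lambda}\,d|z|<\infty$.

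It then remains to absorb the powers of $|\lambda|$. Writing $|\mu_2|=|\lambda|^{A}\big(|\lambda|^{-A}|\mu_2|\big)$ and applying H\"older in $\lambda$ with exponents $p>2$ and conjugate $q<2$,
\[
\int_{|\lambda|>R}|\lambda|^{B-1}|\mu_2|\,d\sigma_\lambda
\le\Big(\int_{|\lambda|>R}|\lambda|^{(A+B-1)q}d\sigma_\lambda\Big)^{1/q}\big\||\lambda|^{-A}\mu_2(z,\cdot)\big\|_{L^p_\lambda}.
\]
In polar coordinates the first factor is finite precisely when $(A+B-1)q<-2$; since $A+B-1=-1-\epsilon_1-\epsilon_2$, this reads $(1+\epsilon_1+\epsilon_2)q>2$, and because $\epsilon_1,\epsilon_2>0$ one has $2/(1+\epsilon_1+\epsilon_2)<2$, so any $q$ close enough to $2$ (equivalently $p>2$ close enough to $2$) works. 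Integrating the resulting bound in $z$ against the finite quantity from the previous step closes the estimate.

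The hard part is the second step: producing a usable trace bound for $\mu_2$ on $\Gamma$. This is where the whole functional setup enters — one must combine the explicit $\partial^{-1}$-term (controlled by \lemref{2310F}, with care that the adjoint kernel and reversed phase sign do not spoil it) with the membership $|\lambda|^{-A}f\in\mathcal H^p$ and the splitting $\mathcal H^p=\mathcal H^p_1+\mathcal H^p_2$ to extract a finite $\int_\Gamma\||\lambda|^{-A}\mu_2\|_{L^p_\lambda}d|z|$. The final power-counting is then routine but genuinely uses proper admissibility (via $\epsilon_2-\epsilon_1>0$ in \lemref{lemma1107A}, and $\epsilon_1+\epsilon_2>0$ here) together with the freedom to take $p>2$ arbitrarily close to $2$.
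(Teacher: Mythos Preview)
Your proposal is correct and follows essentially the same route as the paper's proof: you split $\mu_2$ via (\ref{1107A}) into the explicit term $\big[D\widetilde Q_\lambda\binom{U}{0}\big]_2$ and the remainder $f_2$, control the explicit term by \lemref{2310F} and the remainder by \lemref{lemma1107A}, bound the exponential on $\Gamma$ using admissibility, and close with H\"older in $\lambda$. The only cosmetic differences are that the paper uses the cruder bound $|e^{\overline{\ln|\lambda|\lambda_s(z-w)^2}}|\le|\lambda|^{-1/2}$ instead of your sharper $|\lambda|^{B}$ (so their exponent condition reads $q(1+\epsilon_1)>2$ rather than your $q(1+\epsilon_1+\epsilon_2)>2$), and the paper applies H\"older after integrating over $\Gamma$ while you swap the order via Tonelli first; your explicit $\mathcal H^p_1+\mathcal H^p_2$ splitting for the trace of $f_2$ on $\Gamma$ is actually more detailed than the paper, which simply cites \lemref{lemma1107A} at that step.
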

    	\begin{proof}
    		We divide the integral 
    		\begin{equation}	\label{121A}	
    		\frac{1}{|\lambda|}\int_{\Gamma^{+}} e^{\overline{\ln |\lambda|\lambda_s(z-w)^2}}\mu_2(z) d\bar{z},
    		\end{equation}
    		into two pieces, according to the decomposition of $\mu_2$ given by formula (\ref{1107A}), that is 
    		\begin{equation}\label{122A}
    		\mu_2=\bigg[D\tilde{Q}_{\lambda}\begin{pmatrix}
    		U \\
    		0 
    		\end{pmatrix}\bigg]_2 + f_2.
    		\end{equation}
    		
    		By Lemma \ref{lemma1107A} we have that $\frac{1}{|\lambda|^A} f \in \mathcal{H}^p$, for any $p > 2$. Therefore, we apply (\ref{122A}) to (\ref{121A}) and we split the integral into $I_1$ and $I_2$, according to the order in (\ref{122A}).
    		
    		Since, by assumption, $w$ is an admissible point there exists a $\lambda_s$ fulfilling the inequality  $\sup_{z \in \overline{\mathcal{D}}}\, \textnormal{Re}[\lambda_s(z-w)^2] < -1/2.$
    		
    		So, for $z \in \Gamma^{+}$ we get 
    		\begin{align}
    		\nonumber \left| |\lambda|^A e^{\overline{\ln |\lambda|\lambda_s(z-w)^2}} \right|&=|\lambda|^A|e^{\ln |\lambda|\,\textnormal{Re}[\lambda_s(z-w)^2]}|  < |\lambda|^A e^{-1/2\ln|\lambda|}=|\lambda|^{A-1/2} \\ 
    		\left| |\lambda|^A e^{\overline{\ln |\lambda|\lambda_s(z-w)^2}}\right| & <  |\lambda|^{-\delta},
    		\end{align}
    		where we choose $-\delta=A-1/2  <  0$ (recall, $A  <  1/2$). Hence, we obtain
    		\begin{align*}
    		|I_2| &\leq \frac{1}{|\lambda|}\int_{\Gamma^{+}} \bigg||\lambda|^Ae^{\overline{\ln |\lambda|\lambda_s(z-w)^2}} \bigg(\frac{1}{|\lambda|^A}f_2\bigg)\bigg| \, d|\bar{z}| \\
    		&  < \frac{1}{|\lambda|^{1+\delta}} \int_{\Gamma^{+}} \bigg|\frac{1}{|\lambda|^A}f_2\bigg| \, d|\bar{z}|.
    		\end{align*}
    		
    		Integrating with respect to the spectral parameter, we have for $R >  0$ large enough
    		\begin{align*}
    		\int_{|\lambda| >  R} |I_2| d\sigma_{\lambda} &\leq \int_{|\lambda| >  R} \frac{1}{|\lambda|^{1+\delta}} \int_{\Gamma^{+}} \bigg|\frac{1}{|\lambda|^A}f_2\bigg| \, d|\bar{z}| d\sigma_{\lambda} \\
    		& \leq \left\|\frac{1}{|\lambda|^{1+\delta}}\right\|_{L^q_{\lambda}} \left\|\int_{\Gamma^{+}} \bigg|\frac{1}{|\lambda|^A}f_2\bigg| \, d|\bar{z}|\right\|_{L^p_{\lambda}}.
    		\end{align*}
    		
    		Therefore, by Lemma \ref{lemma1107A} the second norm is finite for $p  > 2$. We now pick $q$  such that $q(1+\delta) > 2$, which is always possible given that $\delta  >  0$. 	Hence, $I_2$ is in $L^1(\lambda:|\lambda| > R).$
    		
    		Now, we look at $I_1$. By definition we have 
    		\begin{equation}
    		I_1=\frac{1}{2|\lambda|} \int_{\Gamma^{+}} e^{\overline{\ln |\lambda|\lambda_s(z-w)^2}}\int_{\mathcal{O}} \frac{e^{\ln |\lambda|\lambda_s(z_1-w)^2+i\,\textnormal{Im}(\lambda(z_1-w)^2)/2}}{\bar{z}-\bar{z}_1}Q_{21}(z_1) \,d\sigma_{z_1}\,d\bar{z}.
    		\end{equation}
    		
    		Again, integrating against the spectral parameter we get:
    		\small
    		\begin{align*}
    		\int_{|\lambda| >  R} |I_1| d\sigma_{\lambda} &\leq \int_{|\lambda| >  R} \frac{1}{2|\lambda|}\int_{\Gamma^{+}} |\lambda|^{-\delta} \Bigg|\frac{1}{|\lambda|^A}\int_{\mathcal{O}} \frac{e^{\ln |\lambda|\lambda_s(z_1-w)^2+i\,\textnormal{Im}(\lambda(z_1-w)^2)/2}}{\bar{z}-\bar{z}_1}Q_{21}(z_1) d\sigma_{z_1}\Bigg|\, d|\bar{z}|\,d\sigma_{\lambda} \\ 
    		& = \int_{\Gamma^{+}}\int_{|\lambda| >  R} \frac{1}{2|\lambda|^{1+\delta}} \Bigg|\frac{1}{|\lambda|^A}\int_{\mathcal{O}} \frac{e^{\ln |\lambda|\lambda_s(z_1-w)^2+i\,\textnormal{Im}(\lambda(z_1-w)^2)/2}}{\bar{z}-\bar{z}_1}Q_{21}(z_1) d\sigma_{z_1}\Bigg|\, d\sigma_{\lambda}\,d|\bar{z} \\
    		& \leq \Bigg|\Bigg|\frac{1}{2|\lambda|^{1+\delta}}\Bigg|\Bigg|_{L^q_{\lambda}} \int_{\Gamma^{+}} \Bigg|\Bigg|\frac{1}{|\lambda|^A}\int_{\mathcal{O}} \frac{e^{\ln |\lambda|\lambda_s(z_1-w)^2+i\,\textnormal{Im}(\lambda(z_1-w)^2)/2}}{\bar{z}-\bar{z}_1}Q_{21}(z_1) d\sigma_{z_1}\Bigg|\Bigg|_{L^p_{\lambda}} d|\bar{z}|,
    		\end{align*}
    		\normalsize
    		where we use Fubini's theorem and H\"older inequality, with $p > 2$ small enough so that the first norm is finite as in the computation of $I_2$.
    		
    		Now, we can use Lemma \ref{2310F}, given that we assume that our potential $Q$ has support in $O$ and it is in $L^{\infty}_{z}$, to obtain a constant $C>0$ depending only on the support of the potential and  on a certain $\tilde{\delta} > 0$:
    		\begin{align*}
    		\int_{|\lambda| >  R} |I_1| d\sigma_{\lambda} \leq C \left\| \frac{1}{2|\lambda|^{1+\delta}}\right\|_{L^q_{\lambda}}  \|Q_{21} \|_{L^{\infty}_{z}} \int_{\Gamma^{+}} \frac{1}{|\bar{z}-w|^{1-\tilde{\delta}}} \, d|\bar{z}|.
    		\end{align*}
    		
    		Given that the last integral is finite, we have $I_1 \in L^1(\lambda:|\lambda| > R)$ and the desired result follows. \end{proof}

    	\subsection{Proof of Theorem \ref{210918T} }
    	
    	Now we can present the proof of our main theorem, using the lemmas of the previous section while paying close attention to how $\mu$ and $f$ are defined.
    	
    	\begin{proof}
    		Let us start by taking a look at the following term
    		
    		\begin{align}\label{Hsc}
    		\nonumber\frac{h(\lambda,w)}{|\lambda|}= \frac{1}{|\lambda|}\Bigg[&\int_{\Gamma^{+}} e^{\overline{\ln|\lambda|\lambda_s(z-w)^2}}\mu_2(z)d\bar{z}\\
    		&+\int_{\mathcal{O}\setminus\overline{\mathcal{D}}} e^{\overline{\text{ln}|\lambda|\lambda_s(z-w)^2}}e^{-i\,\textnormal{Im}(\lambda(z-w)^2)/2} q_{21}(z)\mu_1(z) d\sigma_z\Bigg].
    		\end{align}

    		From (\ref{1107A}) we have
    		$$\mu=f+(I+D\tilde{Q}_{\lambda}) \begin{pmatrix}
    		U \\
    		0 
    		\end{pmatrix},$$
    		whereby $f$ is a solution of
    		$$f=-\Bigg(Mf + M(I+D\tilde{Q}_{\lambda}) \begin{pmatrix}
    		U \\
    		0 
    		\end{pmatrix}\Bigg).$$ 
    		
    		This leads to
    		$\mu_1=-\left[Mf+ M(I+D\tilde{Q}_{\lambda}) \begin{pmatrix}
    		U \\
    		0 
    		\end{pmatrix}\right]_1+U.$ 	Therefore, by (\ref{Hsc}) and the definition of the operator $T$, we get
    		\begin{align}
    		\nonumber \frac{h(\lambda,w)}{|\lambda|}=\frac{1}{|\lambda|}&\int_{\Gamma^{+}} e^{\overline{\ln|\lambda|\lambda_s(z-w)^2}}\mu_2(z)d\bar{z}-\frac{1}{|\lambda|}T\bigg(\big[Mf\big]_1\bigg)\\&-\frac{1}{|\lambda|}T\bigg( \bigg[M(I+D\tilde{Q}_{\lambda}) \begin{pmatrix}
    		U \\
    		0 
    		\end{pmatrix}\bigg]_1\bigg) +
    		\frac{1}{|\lambda|}T[U] =:A+B+C+D.		
    		\end{align}
    		We need to study the terms $A,\,B,\,C,\,D$.
    		By Lemma \ref{lemma1008A}, we have for $p < 2$ and $R$ large enough that:
    		$$B,C \in L^p(\lambda:|\lambda|> R).$$
    		
    		From Lemma \ref{260918A}, we obtain
    		$$A \in  L^1(\lambda:|\lambda|> R).$$
    		
    		Hence, we just need to analyze the behavior of the last term.
    		
    		\begin{align*}
    		T[U]&=\int_{\mathcal{O}\setminus\overline{\mathcal{D}}} e^{\overline{\text{ln}|\lambda|\lambda_s(z-w)^2}}e^{-i\,\textnormal{Im}(\lambda(z-w)^2)/2} q_{21}(z)e^{\ln|\lambda|\lambda_s(z-w)^2} d\sigma_z \\
    		&= \int_{\mathcal{O}\setminus\overline{\mathcal{D}}} e^{\overline{\text{ln}|\lambda|(\sqrt{\lambda_s}z-\sqrt{\lambda_s}w)^2}}e^{-i\,\textnormal{Im}(\lambda(z-w)^2)/2} q_{21}(z)e^{\ln|\lambda|(\sqrt{\lambda_s}z-\sqrt{\lambda_s}w)^2} d\sigma_z \\
    		&= \frac{1}{\lambda_s}\int_{\mathcal{O}\setminus \overline{\mathcal{D}}} e^{\overline{\text{ln}|\lambda|(z-\sqrt{\lambda_s}w)^2}}e^{-i\,\textnormal{Im}(\lambda/\sqrt{\lambda_s}(z-\sqrt{\lambda_s}w)^2)/2} q_{21}(z)\\ &    \hspace{2.5cm}e^{\ln|\lambda|(\sqrt{\lambda_s}z-\sqrt{\lambda_s}w)^2}e^{-i\,\textnormal{Im}(\lambda(z-\sqrt{\lambda_s}w)^2)} e^{i\,\textnormal{Im}(\lambda(z-\sqrt{\lambda_s}w)^2)} d\sigma_z,
    		\end{align*} 
    		where we did a simple change of variables.
    		We define
    		$$\phi(z)=e^{-i\,\textnormal{Im}(\lambda/\lambda_s(z-\sqrt{\lambda_s}w)^2)/2}e^{i\,\textnormal{Im}(\lambda(z-\sqrt{\lambda_s}w)^2)} e^{\overline{\text{ln}|\lambda|(z-\sqrt{\lambda_s}w)^2}} q_{21}(z/\sqrt{\lambda_s}).$$
    		
    		Given that the conditions of Lemma \ref{lemma08007A} are fulfilled, we obtain:
    		\begin{align}
    		T[U]&=\frac{1}{\lambda_s}\Bigg[\frac{2\pi}{|\lambda|}\phi(\sqrt{\lambda_s}w)+R_{\sqrt{\lambda_s}w}(\lambda)\Bigg],
    		\end{align}
    		which by substitution implies:
    		$$\frac{1}{|\lambda|}T[U]=\frac{1}{\lambda_s}\frac{2\pi}{|\lambda|}q_{21}(w)+\frac{1}{\lambda_s}|\lambda|^{-1}R_{\sqrt{\lambda_s}w}(\lambda)=:D_1+D_2$$
    		
    		By Lemma \ref{lemma08007A}, we have $D_2 \in L^1(\lambda:|\lambda|> R)$.
    		
    		So finally we are ready to evaluate the left-hand side of (\ref{210918A}):
    		\begin{align*}
    		\lim_{R \to \infty} \int_{R <  |\lambda|< 2R} |\lambda|^{-1} h(\lambda,w) d\sigma_{\lambda} &= \lim_{R \to \infty} \int_{R <  |\lambda|< 2R} \frac{2\pi}{\lambda_s}|\lambda|^{-2}q_{21}(w) d\sigma_{\lambda} \\
    		&= q_{21}(w)\frac{4\pi^2}{\lambda_s} \lim_{R \to \infty} \int_{R}^{2R} r^{-1} dr \\
    		&=  q_{21}(w)\frac{4\pi^2}{\lambda_s} \lim_{R \to \infty} \ln r\Big|_{R}^{2R} = q_{21}(w)\frac{4\pi^2\ln 2}{\lambda_s}.
    		\end{align*} 
    		
    		From this we get the desired asymptotic:
    		$$ q_{21}(w)=\frac{\lambda_s}{4\pi^2\ln 2} \lim_{R \to \infty} \int_{R <  |\lambda|< 2R} |\lambda|^{-1} h(\lambda,w) d\sigma_{\lambda}.$$
    		
    	\end{proof}
    	\section{Scattering data for Dirac equation via the Dirichlet-to-Neumann map} \label{ScDN}
    	
    	Our next goal is to establish a relation between the Dirichlet-to-Neumann map for equation (\ref{set27A}) and the traces of the solutions of (\ref{firbc}) on $\partial\mathcal O$. Let
    	
    	$$
    	\mathcal{T}_q:= \left\{\phi|_{\partial \mathcal O} : \quad \phi=\left(
    	\begin{array}{c}
    	\phi_1 \\
    	\phi_2 \\
    	\end{array}
    	\right)
    	\mbox{ is a solution of } (\ref{firbc}), \quad \phi_1,  \phi_2 \in H^{1}(\mathcal O)\right\}.
    	$$
    	
    	Let $u\in H^{2}( \mathcal O\setminus\overline{\mathcal{D}})\cap H^2(\mathcal{D})$ be a solution of (\ref{set27A}) with $u|_{\partial \mathcal O}= f \in H^{3/2}(\partial \mathcal O)$. Consider $\phi=\gamma^{1/2}(\partial u, \bar{\partial} u) \in  H^{1}(\mathcal O\setminus\overline{\mathcal{D}})\cap H^1(\mathcal{D})$. Then, formally
    	\begin{equation}\label{1112A}
    	\phi|_{\partial \mathcal O}= \frac{1}{2}
    	\left ( \!\begin{array}{cc} \ \bar{\nu} & -i\bar{\nu} \\ \nu  & i\nu \end{array} \!\right )
    	\left ( \!\!\!\begin{array}{c} \Lambda_\gamma  f  \\ \partial_s f \end{array} \!\!\!\right ),
    	\end{equation}
    	where $\Lambda_\gamma$ is the co-normal D-t-N map and $\partial_s$ is the operator of the tangential derivative. Inverting we get
    	\begin{equation}\label{1112C}
    	\left ( \!\!\!\begin{array}{c} \Lambda_\gamma  f  \\ \partial_s f \end{array} \!\!\!\right )=
    	\left ( \!\begin{array}{cc} {\nu} & \bar{\nu} \\ i\nu  & -i\bar{\nu} \end{array} \!\right ) \phi|_{\partial \mathcal O}.
    	\end{equation}
    	We normalize $\partial_s^{-1}$ in such a way that
    	$$
    	\int_{\partial \mathcal O} \partial_s^{-1}f ds =0.
    	$$
    	Then (\ref{1112C}) could be rewritten as {\it a boundary relation}
    	\begin{equation}\label{1112B}
    	(I-i\Lambda_\gamma\partial_s^{-1}) (\nu \phi_1|_{\partial \mathcal O}) = (I+ i\Lambda_\gamma\partial_s^{-1}) (\bar{\nu} \phi_2|_{\partial \mathcal O})
    	\end{equation}
    	Let us show the generalization of  \cite[Thm 3.2]{knud}, where $\gamma \in  C^{1+\epsilon}(\mathbb{R}^2)$, to the case of non-continuous $\gamma$.
    	
    	\begin{theorem}\label{Th1112A}
    		$$
    		\mathcal{T}_q = \left\{ (h_1,h_2)\in H^{1/2}(\partial \mathcal O) \times  H^{1/2}(\partial \mathcal O)  : (I-i\Lambda_\gamma\partial_s^{-1}) (\nu h_1) = (I+ i\Lambda_\gamma\partial_s^{-1}) (\bar{\nu} h_2) \right\}
    		$$
    	\end{theorem}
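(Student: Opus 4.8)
The plan is to prove the two inclusions separately, in both cases exploiting the correspondence $u \mapsto \phi=\gamma^{1/2}(\partial u,\bar\partial u)^t$ from (\ref{0707B}) between solutions of the conductivity problem (\ref{set27A}) and solutions of the Dirac system (\ref{firbc}). First I would record the basic dictionary: if $u\in H^2(\mathcal{D})\cap H^2(\mathcal O\setminus\overline{\mathcal{D}})$ solves (\ref{set27A}) with the transmission condition (\ref{TransCond}), then $\phi$ lies in $H^1(\mathcal{D})\cap H^1(\mathcal O\setminus\overline{\mathcal{D}})$, solves (\ref{firbc}) off $\Gamma$, and satisfies the Dirac transmission condition (\ref{7JunB}) by the first lemma of Section~2. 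Conversely, given such a $\phi$, the relations $q_{12}=-\tfrac12\partial\log\gamma$, $q_{21}=-\tfrac12\bar\partial\log\gamma$ in (\ref{char1bc}) force the integrability identity $\bar\partial(\gamma^{-1/2}\phi_1)=\partial(\gamma^{-1/2}\phi_2)$, so that $\gamma^{-1/2}\phi$ is the gradient (in the $\partial,\bar\partial$ sense) of a function $u$, determined up to an additive constant in each of $\mathcal{D}$ and $\mathcal O\setminus\overline{\mathcal{D}}$; the off-diagonal part of (\ref{7JunB}) then matches the tangential derivatives of $u^{\pm}$ across $\Gamma$ and its diagonal part the fluxes, so that after fixing the constants $u$ is continuous across $\Gamma$ and solves (\ref{set27A}) with (\ref{TransCond}). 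Since $\Gamma\Subset\mathcal O$ and $\gamma\equiv1$ near $\partial\mathcal O$, all boundary traces are controlled in $H^{1/2}(\partial\mathcal O)$ and $\gamma^{1/2}=1$ there.

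For the inclusion $\mathcal{T}_q\subseteq\{\dots\}$, I take $\phi\in\mathcal{T}_q$ and reconstruct $u$ as above, with $f:=u|_{\partial\mathcal O}\in H^{3/2}(\partial\mathcal O)$. Because $\gamma=1$ on $\partial\mathcal O$, the co-normal derivative equals $u_n$ and formulas (\ref{8})--(\ref{9}) give exactly (\ref{1112A}); inverting the constant-determinant matrix yields (\ref{1112C}), i.e.\ $\Lambda_\gamma f=\nu\phi_1+\bar\nu\phi_2$ and $\partial_s f=i\nu\phi_1-i\bar\nu\phi_2$ on $\partial\mathcal O$. Writing $f=\partial_s^{-1}(\partial_s f)$ with the prescribed normalization and substituting into the first identity eliminates $f$ and produces the boundary relation (\ref{1112B}) for $(h_1,h_2)=\phi|_{\partial\mathcal O}$. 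Since $\phi\in H^1$ near $\partial\mathcal O$, we have $(h_1,h_2)\in H^{1/2}(\partial\mathcal O)\times H^{1/2}(\partial\mathcal O)$, so $\phi|_{\partial\mathcal O}$ lies in the right-hand set.

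For the reverse inclusion, I start from a pair $(h_1,h_2)$ satisfying (\ref{1112B}) and construct a Dirac solution with this trace. Set $k:=i\nu h_1-i\bar\nu h_2\in H^{1/2}(\partial\mathcal O)$ and $f:=\partial_s^{-1}k\in H^{3/2}(\partial\mathcal O)$ (normalized as above), let $u$ be the unique solution of the transmission Dirichlet problem for (\ref{set27A})--(\ref{TransCond}) with $u|_{\partial\mathcal O}=f$, and put $\phi:=\gamma^{1/2}(\partial u,\bar\partial u)^t$. By the dictionary $\phi$ solves (\ref{firbc}) with (\ref{7JunB}) and lies in $H^1$ near $\partial\mathcal O$, so $\phi|_{\partial\mathcal O}\in\mathcal{T}_q$. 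It remains to check $\phi|_{\partial\mathcal O}=(h_1,h_2)$: by construction $\partial_s f=k$, while the boundary relation (\ref{1112B}) is exactly the compatibility condition guaranteeing $\Lambda_\gamma f=\nu h_1+\bar\nu h_2$; feeding these two equalities into (\ref{1112A}) returns $(h_1,h_2)$, as desired.

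The main obstacle is the well-posedness and regularity of the transmission Dirichlet problem underlying the dictionary, which replaces the smooth elliptic theory available in \cite{knud} for $\gamma\in C^{1+\epsilon}$. Since here $\gamma$ is only piecewise $W^{2,\infty}$ with a genuine jump on $\Gamma$, I would obtain existence and uniqueness from the variational formulation, using $\textnormal{Re}(\gamma)\ge c>0$ to get coercivity of the sesquilinear form $\int_{\mathcal O}\gamma\,\nabla u\cdot\overline{\nabla v}$ on $H^1_0(\mathcal O)$, and then invoke interior and boundary elliptic regularity on each side of $\Gamma$ (where $\gamma\in W^{2,\infty}$) to place $u$ in $H^2(\mathcal{D})\cap H^2(\mathcal O\setminus\overline{\mathcal{D}})$; the transmission conditions (\ref{TransCond}) arise as the natural jump conditions of this variational problem. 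The second delicate point, needed for bijectivity of the dictionary, is to verify that the matrix in (\ref{7JunB}) is invertible, so that the Dirac transmission condition genuinely recovers both the continuity of $u$ and the flux condition across $\Gamma$; this reduces to checking that the associated $2\times2$ system built from $\alpha$ and $\nu$ is nondegenerate, which holds for $\alpha\neq0$.
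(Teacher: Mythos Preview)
Your argument is essentially the paper's own proof: both directions rest on the dictionary $u\leftrightarrow\phi=\gamma^{1/2}(\partial u,\bar\partial u)^t$, with the Poincar\'e lemma (via the compatibility $\bar\partial(\gamma^{-1/2}\phi_1)=\partial(\gamma^{-1/2}\phi_2)$) producing $u$ from $\phi$, and the transmission Dirichlet problem with data $f=i\partial_s^{-1}(\nu h_1-\bar\nu h_2)$ producing $\phi$ from $(h_1,h_2)$, after which (\ref{1112A})--(\ref{1112B}) close the loop. You supply more detail than the paper on the analytic underpinnings---variational well-posedness and piecewise $H^2$ regularity of the transmission problem, and the nondegeneracy needed to pass from (\ref{7JunB}) back to (\ref{TransCond})---which the paper leaves implicit; one small caveat is that your assertion ``$\gamma\equiv 1$ near $\partial\mathcal O$'' is not stated in the paper (only the extension by $1$ outside $\mathcal O$ is), though some normalization of $\gamma$ on $\partial\mathcal O$ is indeed needed for (\ref{1112A}) to hold as written.
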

    	\begin{proof}
    		First we show that any pair $(h_1,h_2)^t\in H^{1/2}(\partial \mathcal O) \times  H^{1/2}(\partial \mathcal O)$ that satisfies the boundary relation above is in $\mathcal{T}_q$.
    		Consider a solution $u\in H^2(\mathcal O\setminus \overline{\mathcal{D}})\cap H^2(\mathcal{D})$ of (\ref{set27A}) with the boundary condition
    		$$
    		u|_{\partial \mathcal O} = i\partial_s^{-1}(\nu h_1 -\bar{\nu} h_2)\in H^{3/2}(\partial \mathcal O).
    		$$
    		Since $\gamma\in  W^{1,\infty}(\mathcal O\setminus\overline{\mathcal{D}})\cap W^{1,\infty}(\mathcal D)$ and $\gamma$ is separated from zero, it follows that $\gamma^{1/2}\in  W^{1,\infty}(\mathcal O\setminus\overline{\mathcal{D}})\cap W^{1,\infty}(\mathcal{D})$. Then, both components of the vector $\phi=\gamma^{1/2}(\partial u, \bar{\partial} u)^t$ belong to $H^{1}( \mathcal O\setminus\overline{\mathcal{D}})\cap H^1(\mathcal D)$ and $\phi$ satisfies (\ref{firbc}). The fact  $\phi|_{\partial \mathcal O}=(h_1,h_2)^t$ follows from (\ref{1112A}) and (\ref{1112B}).
    		
    		Conversely, we start  with a solution $\phi \in H^1(\mathcal O\setminus \overline{\mathcal{D}})\cap H^1(\mathcal{D})$ of (\ref{firbc}) satisfying (\ref{7JunB}) on $\Gamma$. From (\ref{firbc}) and (\ref{char1bc}) the following compatibility condition holds on $\Gamma$
    		$$
    		\bar{\partial} (\gamma^{-1/2} \phi_1) = {\partial} (\gamma^{-1/2} \phi_2).
    		$$
    		The Poincaré lemma ensure the existence of a function $u$ such that
    		$$
    		\left ( \!\!\!\begin{array}{c} \phi_1 \\ \phi_2 \end{array} \!\!\!\right ) =
    		\gamma^{1/2}\left ( \!\!\!\begin{array}{c} \partial u \\ \bar{\partial} u\end{array} \!\!\!\right ) \quad \text{on} \quad \mathcal{O}\setminus\Gamma.
    		$$
    		It is easy to check that $u$ is a solution to (\ref{set27A}) on $\mathcal{O\setminus}\Gamma$ and belongs to $H^2(\mathcal O\setminus\overline{\mathcal{D}})\cap H^2(\mathcal{D})$. Moreover, through the Poincaré Lemma and (\ref{7JunB}) it satisfies the transmission condition (\ref{TransCond}). Then, (\ref{1112A})-(\ref{1112B}) proves that $h=\phi|_{\partial \mathcal O}$ satisfies the boundary relation stated in the theorem.
    	\end{proof}
    	
    	Denote $S_{\lambda,w} : H^{1/2}(\partial \mathcal O) \rightarrow H^{1/2}(\partial \mathcal O) $
    	$$
    	S_{\lambda,w} f (z)=\frac{1}{i\pi} \int_{\partial \mathcal O}f(\varsigma) \frac{e^{-\lambda(z-w)^2+ \lambda(\varsigma -w)^2}}{\varsigma - z} d \varsigma
    	$$
    	This integral is understood in the sense of principal value.
    	
    	A future idea to explore is to determine conditions on how to find the trace of $\phi$ at $\partial\mathcal{O}$. A hint to this is given in \cite[Th. III.3.]{knud} for $\gamma \in C^{1+\varepsilon}(\mathbb R^2)$, although in here the exponential growing solutions are of the type $e^{ikz}$. 
    	
    	In this sense, we state a conjecture that we would like to prove, in future work, for our method:
    	\begin{conjecture}
    		The only pair $(h_1,h_2) \in H^{1/2}(\partial \mathcal O) \times H^{1/2}(\partial \mathcal O) $ which satisfies
    		\begin{eqnarray}\label{1611A}
    		(I-S_{\lambda,w} )h_1=2e^{\lambda (z-w)^2}, \\
    		(I-\overline{S_{\lambda,w}})h_2=0, \\
    		(I-i\Lambda_\gamma\partial_s^{-1}) (\nu h_1) = (I+ i\Lambda_\gamma\partial_s^{-1}) (\bar{\nu} h_2)
    		\end{eqnarray}
    		is $(\left.\phi_1\right|_{\partial\mathcal{O}},\left.\phi_2\right|_{\partial\mathcal{O}})$, where $\phi_1,\phi_2$ are the solutions of the Dirac equation (\ref{firbc}) satisfying the asymptotics (\ref{8JunA}).
    		
    	\end{conjecture}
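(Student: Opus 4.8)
\emph{Proof strategy.} The plan is to prove the two inclusions separately, following the scheme of \cite[Th.\ III.3]{knud} but adapting it on the one hand to the quadratic Faddeev phase $e^{\lambda(z-w)^2}$ that appears in $S_{\lambda,w}$ (in place of the linear phase $e^{ikz}$ used there) and on the other hand to the jump of $\gamma$ across $\Gamma$. The easy inclusion is that the genuine CGO trace $(\phi_1|_{\partial\mathcal O},\phi_2|_{\partial\mathcal O})$ solves the whole system; the real content is uniqueness, namely that any pair $(h_1,h_2)$ satisfying the three relations must be that trace. Throughout I would work under the standing smallness hypothesis on $\alpha-1$, which is exactly what guarantees, via \lemref{lemma2107A}, that the Lippmann--Schwinger operator $I+M$ of (\ref{0907H}) is invertible, i.e.\ that $w$ is not an exceptional point.

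First I would check that the true traces satisfy (\ref{1611A}). The third relation is precisely \thmref{Th1112A}. For the first two I would apply the Cauchy--Pompeiu formula to $\phi_1$ and $\phi_2$ on $\mathcal O$: since $\gamma\equiv 1$ and hence $q\equiv 0$ outside $\mathcal O$, the modulated function $e^{-\lambda(z-w)^2}\phi_1$ and its conjugate analogue are $\bar\partial$- (resp.\ $\partial$-) holomorphic in the exterior and carry the asymptotics (\ref{8JunA}). Writing $\phi_1$ as a Cauchy integral over $\partial\mathcal O$ plus the volume term coming from $\bar\partial\phi_1=q_{12}\phi_2$ and letting $z\to\partial\mathcal O$, the Plemelj jump relation turns the boundary integral into the principal-value operator $S_{\lambda,w}$, the exponential conjugation producing exactly its kernel, while the inhomogeneity $2e^{\lambda(z-w)^2}$ is the contribution of the entire incident wave $U$. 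The same computation with the adjoint kernel yields $(I-\overline{S_{\lambda,w}})h_2=0$, the right-hand side vanishing because $\phi_2$ carries no leading CGO term.

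For uniqueness I would run this construction backwards. Given $(h_1,h_2)$ solving (\ref{1611A}), I define interior functions on $\mathcal O\setminus\Gamma$ by the modified Cauchy potentials with kernel $\tfrac{e^{-\lambda(z-w)^2+\lambda(\varsigma-w)^2}}{\varsigma-z}$ and its conjugate. The first two integral equations are exactly the consistency (Plemelj) conditions forcing these potentials to have boundary traces $h_1,h_2$ and to satisfy the homogeneous $\bar\partial$/$\partial$ equations away from the support of $q$; the third, D-t-N, relation then couples the two components so that the resulting pair solves the full Dirac system (\ref{firbc}) together with the transmission condition (\ref{7JunB}) on $\Gamma$, exactly as in the converse part of the proof of \thmref{Th1112A}. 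Since this pair carries the asymptotics (\ref{8JunA}), it is a CGO solution, and by the unique solvability of the Lippmann--Schwinger equation (\lemref{lemma2107A}) it must coincide with $\phi$; restricting to $\partial\mathcal O$ gives $(h_1,h_2)=(\phi_1|_{\partial\mathcal O},\phi_2|_{\partial\mathcal O})$.

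The main obstacle is this backward reconstruction across the interface. Two points require genuine work beyond \cite{knud}. First, one must show that the interface behaviour implicitly encoded by the two singular integral equations is compatible with, and in fact forces, the transmission condition (\ref{7JunB}); this is where the jump enters and where smallness of $\alpha-1$ is again used, to keep the associated interface operator contractive. Second, the quadratic phase makes the mapping and invertibility properties of $I\mp S_{\lambda,w}$ on $H^{1/2}(\partial\mathcal O)$ less standard than in the linear-phase Faddeev setting; establishing that $I-S_{\lambda,w}$ is Fredholm of index zero and injective (equivalently, absence of exceptional points for the boundary operator) is the technical crux, and I would expect to obtain it from the same contraction estimates that underlie \lemref{lemma2107A} rather than from classical Vekua theory.
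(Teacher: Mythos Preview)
The statement you are attempting to prove is explicitly labelled a \emph{conjecture} in the paper and is \emph{not} proved there. The authors write, immediately before stating it, that this is ``a future idea to explore'' and that they ``state a conjecture that we would like to prove, in future work, for our method.'' There is therefore no proof in the paper against which to compare your proposal.

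That said, your outline is a sensible blueprint and tracks the hint the paper itself gives, namely to adapt \cite[Th.\ III.3]{knud} from the linear phase $e^{ikz}$ to the quadratic phase $e^{\lambda(z-w)^2}$ and to accommodate the jump on $\Gamma$. You correctly separate the two inclusions, correctly invoke \thmref{Th1112A} for the boundary relation, and correctly identify the two genuine obstacles: (i) showing that the backward Cauchy-potential construction is compatible with, and forces, the transmission condition (\ref{7JunB}); and (ii) establishing the Fredholm and injectivity properties of $I-S_{\lambda,w}$ on $H^{1/2}(\partial\mathcal O)$ for the quadratic phase. However, your proposal does not actually resolve either obstacle --- you yourself flag the second as ``the technical crux'' and only express an expectation that the contraction estimates of \lemref{lemma2107A} should suffice. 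That expectation is not obviously justified: \lemref{lemma2107A} controls the \emph{interior} Lippmann--Schwinger operator $I+M$ on $\mathcal H^p$, not the \emph{boundary} singular integral operator $I-S_{\lambda,w}$ on $H^{1/2}(\partial\mathcal O)$, and passing from one to the other with the quadratic phase and the $U$-enrichment (\ref{0907A}) is precisely the missing ingredient. So what you have is a plausible strategy for attacking an open problem, not a proof; and since the paper offers no proof either, there is nothing further to compare.
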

    
    \section{Appendix A.}
    Here we show the proof of Lemma \ref{2310F}, which corresponds to the application of the Laplace Transform analogue of the Hausdorff-Young inequality. This lemma stems from conversations and discussions with  S. Sadov \cite{Sadov}. Our deep thanks.
    
    \section{Laplace Transform analogue of the Hausdorff-Young inequality}
    We need to recall some statements on the Laplace Transform. 
    
    The following results hold (see \cite{Sadov}): consider the map
    $$
    L_\gamma f(s)= \int_0^\infty e^{-z(s)t}f(t)dt
    $$
    where $s>0$ is the arclength of a contour $\gamma ~:~= \{z(s): s>0,\, \textnormal{Re}(z(s))>0 \}$.
    
    Theorem 7 from \cite{Sadov} claims that $L_\gamma$ is a bounded operator from $L^q(\mathbb R^+)$ to $L^p(\gamma)$, where $1 \leq q \leq 2$ and $1/p+1/q=1$.
    Moreover, the norm of this map is bounded uniformly in the class of convex contours.
    
    Now we consider only contours such that $|(\textnormal{Re} z(s))'|<1/2$ for $s>>1$. This means that the spaces  $L^p, p>1$  for the variable $s>0$ and for variable $\,\textnormal{Im} z(s)$ are equivalent.  We now prove that the result of the Hausdorff-Young inequality is valid for the following map on the plane:
    $$
    L f(\lambda_1,\lambda_2)= \int_0^\infty   \int_1^\infty   e^{-i\lambda_1 x} e^{-i\lambda_2 y -\ln|\lambda_2| y} f(x,y)d x d y, \quad \lambda_1,\lambda_2>0,
    $$
    namely we prove that for some fixed domain $D$ and constant $C=C(D)>0,$ we have
    \begin{equation}\label{0607F}
    \|Lf\|_{L^p_{\lambda_1,\lambda_2}} \leq C \|f\|_{L^q_{x,y}}, \quad \mbox{Supp} f \subset D.
    \end{equation}
    {\bf Proof}
    Consider the function $A(y,\lambda_1)$:
    \begin{equation}\label{0607A}
    A(y,\lambda_1)= \int_0^\infty   e^{-i\lambda_1 x} f(x,y)d x, \quad y>1
    \end{equation}
    and note that by Hausdorff-Young inequality we get 
    \begin{equation}\label{0607E}
    \|A(y,\cdot)\|_{L^p_{\lambda_1}} = \left ( \int |A(y,\lambda_1)|^p d\lambda_1 \right )^{1/p} \leq \left ( \int |f(x,y)|^qdx \right )^{1/q}.
    \end{equation}
    For the sake of simplicity we omit all positive constants here and in further inequalities.
    We claim that $A(\cdot,\lambda_1) \in L^q_y$ and we prove this fact later. Accepting this claim and using the above mentioned theorem from \cite{Sadov} we get 
    \begin{equation}\label{0607B}
    \int \left | \int e^{-i\lambda_2 y -\ln|\lambda_2| y} A(y,\lambda_1) dy \right |^p d\lambda_2 \leq (\|A(\cdot,\lambda_1)\|_{L^q_y})^p.
    \end{equation}
    Further we use the notation $\|\cdot \|$ for $\|\cdot \|_{L^p_{\lambda_1,\lambda_2}}$. Now we are ready to estimate $\|Lf\|$:
    \begin{equation}\label{0607C}
    \|Lf\|^p= \int \int \left | \int e^{-i\lambda_2 y -\ln|\lambda_2| y} A(y,\lambda_1) dy  \right |^p d\lambda_1 d\lambda_2 \leq
    \int (\|A(\cdot,\lambda_1)\|_{L^q_y})^p d \lambda_1.
    \end{equation}
    First we apply the integral form of the Minkowski inequality, and then (\ref{0607E}). Hence,  we get:
    \begin{equation}\label{0607D}
    \left ( \int \|A(\cdot,\lambda_1)\|^p_{L^q_y} d \lambda_1 \right )^{q/p}= \left ( \int \left | \int |A(y,\lambda_1)|^q dy \right |^{p/q} d\lambda_1 \right )^{q/p} \leq
    \end{equation}
    $$
    \int \left ( \int |A(y,\lambda_1)|^p d\lambda_1 \right )^{q/p} dy \leq \int \left ( \int |f(x,y)|^qdx \right ) dy.
    $$
    This proves (\ref{0607F}). Now let us show that $A(\cdot,\lambda_1) \in L^q_y$. From Minkowski inequality we get
    $$
    \|A(\cdot,\lambda_1)\|_{L^q_y} = \left (\int \left | \int_0^\infty e^{-i\lambda_1 x} f(x,y)dx  \right |^q dy \right)^{1/q} \leq
    \int \left ( \int |f(x,y)|^q dy \right )^{1/q} dx.
    $$
    Since $f$ has finite support, then the function $\int |f(x,y)|^q dy$ has  finite support too. Let us denote by $C_1$ the length of its support. Therefore
    $$
    \int \left ( \int |f(x,y)|^q dy \right )^{1/q} dx \leq \int \left ( \int |f(x,y)|^q dy \right )^{} dx + C_1.
    $$
    
    \section{Proof of Lemma \ref{2310F}}
    
    The following lemma represents a generalization of Lemma 3.2 from \cite{ltv}.
    Consider $\lambda_0 \in \mathbb C$, denote by 
    $\rho(z)= -i\,\textnormal{Im}[\lambda(z-w)^2]/2+\ln |\lambda| \lambda_0(z-w)^2$, and let
    $A_0=\sup_{z \in \mathcal O}\, \textnormal{Re}[ \lambda_0(z-w)^2].$ For convenience we recall Lemma \ref{2310F}:\\
    
    \textbf{Lemma}
    \textit{Let $z_1,w \in \mathbb C$, $p > 2$ and $\varphi \in L^\infty_{\rm{comp}}$. Then
    	$$
    	\left \|\frac{1}{|\lambda|^{A_0}}\int_{\mathbb C} \varphi(z)\frac{e^{\rho(z)} }{z-z_1} \, d{\sigma_{z}} \right \|_{L^p_\lambda(\mathbb C)}
    	\leq C\frac{\|\varphi\|_{L^\infty}}{|z_1-w|^{1-\delta}},
    	$$
    	where the constant $C$ depends only on the support of $\varphi$ and on $\delta =\delta(p)>0$.}
    
    \begin{proof}
    	Denote by $F=F(\lambda,w,z_1)$ the integral on the left-hand side of the inequality above. In order to have non-positiveness of the real part of the phase we make a change of  variables $u=(z-w)^2$ in $F$ and take into account that $d{\sigma_{u}} =4|z-w|^2 d{\sigma_{z}}$. Then
    	\begin{equation}\label{250918A}
    	F=\frac{1}{4}\sum_{\pm} \int_{\mathbb C} \varphi(w\pm\sqrt{u})\frac{e^{i\,\textnormal{Im}(\lambda u)/2+\lambda_0 \ln|\lambda| u }}{|u|(\pm\sqrt{u}-(z_1-w))} \, d{\sigma_{u}}.
    	\end{equation}
    	Now, we consider a new change of variable $\widehat{u}=u-u^0$, where
    	$$
    	u^0=\rm{argmax}_{w\pm\sqrt{u} \in \rm{supp} \varphi}\,\textnormal{Re}(\lambda_0 u)
    	$$
    	and apply the Hausdorff-Young inequality for the Laplace transform on a contour (\ref{0607F}). The result on Lemma \ref{2310F} follows immediately from \cite{ltv}, Lemma 3.1 which we recall here for the reader's convenience
    	
    	{\it \cite[Lemma 3.1]{ltv}
    		Let $1\leq p<2$. Then the following estimate is valid for an arbitrary $0 \neq a \in \mathbb C$ and some constants $C=C(p,R)$ and $\delta=\delta(p)>0$:
    		$$
    		\left \| \frac{1}{u(\sqrt{u}-a)} \right \|_{L^p(u\in \mathbb C:|u|<R)} \leq C(1+|a|^{-1+\delta}).
    		$$
    	}
    \end{proof}

\end{document}